\newcommand{\blind}{0}
\DeclareMathOperator*{\argmin}{arg\,min}
\DeclareMathOperator*{\argmax}{arg\,max}
\newtheorem{theorem}{Theorem}
\newtheorem{lemma}{Lemma}
\newtheorem*{remark}{Remark}
\begin{document}

\def\spacingset#1{\renewcommand{\baselinestretch}%
{#1}\small\normalsize} \spacingset{1}

%%%%%%%%%%%%%%%%%%%%%%%%%%%%%%%%%%%%%%%%%%%%%%%%%%%%%%%%%%%%%%%%%%%%%%%%%%%%%%

\if0\blind
{
  \title{\bf Nonparametric Estimation for a Log-concave Distribution Function with Interval-censored Data}
  \author{Chi Wing Chu\\
  % \thanks{
    % The authors gratefully acknowledge \textit{please remember to list all relevant funding sources in the unblinded version}}\hspace{.2cm}\\
    Department of Management Sciences, City University of Hong Kong\\
    \\
    Hok Kan Ling \\
    Department of Mathematics and Statistics, Queen's University\\
    \\
    Chaoyu Yuan\\
    Department of Statistics, Columbia University}
  \maketitle
} \fi

\if1\blind
{
  \bigskip
  \bigskip
  \bigskip
  \begin{center}
    {\LARGE\bf Title}
\end{center}
  \medskip
} \fi

\bigskip

\abstract{We consider the nonparametric maximum likelihood estimation for the underlying event time based on mixed-case interval-censored data, under a log-concavity assumption on its distribution function. This generalized framework relaxes the assumptions of a log-concave density function or a concave distribution function considered in the literature. A log-concave distribution function is fulfilled by many common parametric families in survival analysis and also allows for multi-modal and heavy-tailed distributions.
We establish the existence, uniqueness and consistency of the log-concave nonparametric maximum likelihood estimator. A computationally efficient procedure that combines an active set algorithm with the iterative convex minorant algorithm is proposed. Numerical studies demonstrate the advantages of incorporating additional shape constraint compared to the unconstrained nonparametric maximum likelihood estimator. The results also show that our method achieves a balance between efficiency and robustness compared to assuming log-concavity in the density. An R package \texttt{iclogcondist} is developed to implement our proposed method.\\
}

\noindent%
{\it Keywords:}  
active set algorithm; current status data; interval censoring; iterative convex minorant algorithm; shape constraint
%3 to 6 keywords, that do not appear in the title
\vfill

\newpage
\spacingset{1.75} % DON'T change the spacing!

\section{Introduction}
Interval-censored data arise when a failure event is only known to occur within a specific time interval. This is a prevalent data structure in various clinical trials and longitudinal studies \citep{sun2006statistical} when there is a periodic follow-up. For example, in trials related to acquired immune deficiency syndrome (AIDS), the time from human immunodeficiency virus (HIV) infection to the onset of AIDS cannot be determined exactly, as diagnosis typically relies on periodic blood tests \citep{de1989analysis}. Another example is the emergence of adult teeth \citep{vanobbergen2000tooth}, where the time of emergence is not exactly known because the children only go to a dental checkup annually.

Let $X_1,\ldots,X_N$ be independent random event times with an unknown distribution function $F_0$ on $[0,\infty)$. For interval-censored data, instead of directly observing these event times, we can only inspect each subject $i$ at finitely many time points $0 < T_{i1} < \ldots < T_{i,m(i)} < \infty$, where $m(i)$ denotes the total number of inspections, and determine whether the event has occurred or not. Thus, for each subject, we observe
\begin{equation*}
    Y_{ij} := \mathbbm{1}(T_{i,j-1} < X_i \leq T_{ij}), \quad \text{ for } 1 \leq j \leq m(i) + 1,
\end{equation*}
with $T_{i,0} := 0$ and $T_{i,m(i)+1} := \infty$. Here, $m(i)$ may vary across different subjects. This is referred to as the mixed-case interval-censoring data \citep{schick2000consistency}. A notable special case is when each subject is inspected only once, i.e., $m(i) = 1$ for $i=1,\ldots,N$. In this scenario, the observation $Y_{i1}\equiv 1-Y_{i2}$ represents the event status of subject $i$ at time $T_{i1}$ and is called the current status data. An example of this can be found in cross-sectional studies on survival events \citep{keiding1991age}. More generally, when each subject is inspected exactly $J$ times, it is referred to as case $J$ interval censoring.

The nonparametric maximum likelihood estimator (NPMLE) of the distribution function for interval-censored data has been studied in \cite{turnbull1976empirical}. The NPMLE takes the form of a step function and is known to converge slowly. Under certain regularity conditions, the NPMLE converges at a rate $n^{1/3}$, potentially with an additional logarithmic factor \citep{groeneboom1992information, van1993hellinger}. On the other hand, while parametric models for interval-censored data lead to estimators with a superior convergence rate of $n^{1/2}$, they are vulnerable to model misspecification.

As an alternative, shape-constrained nonparametric estimators offer a valuable option compared to parametric models or other nonparametric methods. Namely, by imposing a qualitative smoothness constraint on the event time distribution, the resulting estimator serves as a middle ground between parametric and fully nonparametric approaches in terms of efficiency and robustness. %in exchange for a parsimonious model with a potentially faster convergence rate than the vanilla nonparametric estimation.
For example, \cite{dumbgen2004consistency} shows that the concave-constrained NPMLE achieves a rate of convergence of $(n/\log (n))^{2/5}$ in the current status model, a substantial improvement over the unconstrained case. Additionally, unlike many other nonparametric approaches, such as bandwidth-dependent kernel methods, shape-constrained nonparametric estimators typically do not require tuning parameters. Shape-constrained estimators are also known to have the potential to adapt to certain characteristics of the unknown function \citep{samworth2018recent}.

In this article, we consider the nonparametric maximum likelihood estimation of the underlying distribution function $F_0$ under the assumption that it is log-concave. A function is said to be log-concave if the logarithm of the function is concave. It is noteworthy that the term ``log-concave distribution" in the literature may refer to the scenario that the density function is log-concave. In contrast, ``log-concave distribution function'' in this paper refers to the distribution function itself being log-concave. In the sequel, we will refer to the NPMLE without any additional shape constraints as the unconstrained MLE, and the NPMLE assuming a log-concave distribution function as the log-concave MLE.

The log-concavity constraint on $F_0$ is appealing because it encompasses a broad class of distributions. Detailed properties and examples of log-concave distribution functions can be found in \cite{sengupta1999log}, \cite{sengupta2004concave}, \cite{bagnoli2005log} and the references therein. Familiar parametric models for survival data, such as the exponential, two-parameter Weibull, log-normal, and log-logistic models \citep{sun2006statistical}, all have log-concave distribution functions. Notably, this class of log-concave distributions contains two interesting subsets:
\begin{enumerate}[(I)]
\item concave distribution functions (or equivalently, monotone decreasing densities): this follows from the fact that logarithm is a concave function;
\item log-concave density functions: see, for example, \cite{bagnoli2005log}.
\end{enumerate}
It is worth outlining some key differences between the class of log-concave distributions and these two subsets. First, a log-concave distribution function can have a multi-modal density, such as the mixture of folded normal distributions, i.e. the mixture of absolute values of normal distributions. In contrast, a log-concave density must be unimodal while a concave distribution function corresponds to monotone decreasing densities. Second, while a log-concave density must have an increasing hazard function \citep{bagnoli2005log}, a log-concave distribution function can exhibit a hazard function that first increases then decreases, as demonstrated by the log-logistic distribution. Third, a log-concave distribution can exhibit a heavy tail, as in the case of the Pareto distribution, while a log-concave density must have an exponentially decaying tail.

For interval-censored data, the estimation of a concave  distribution function was considered in \cite{dumbgen2006estimating}, while the estimation of a log-concave density has been studied by \cite{dumbgen2014maximum} and \cite{anderson2016computing}. The preceding discussion motivates us to consider estimating a distribution function under its log-concavity, which is a more general framework and is broadly applicable in survival analysis. The resulting log-concave MLE does not require any tuning parameters. We establish its existence, uniqueness and consistency following the techniques in \cite{dumbgen2006estimating}. 

To compute the log-concave MLE, we propose an active set algorithm based on the framework of \cite{dumbgen2007active}, coupled with the iterative convex minorant (ICM) algorithm \citep{jongbloed1998iterative}. The active set algorithm involves a constrained optimization with only a small number of variables compared to the total sample size. The constrained optimization can then be solved efficiently via the ICM algorithm. With careful implementation, the computational complexity of each iteration in the ICM algorithm and the computation of the criterion for adding a knot in the active set algorithm is $O(n)$. Our implementation of the algorithm converges in an average of $0.1$ to $0.2$ seconds for a sample size of $1000$, and $2$ to $3$ seconds for a sample size of $10000$, on a laptop computer (CPU: Intel i9-12900HK 2.50 GHz). An R package \texttt{iclogcondist} is developed to implement the proposed algorithm and is available on GitHub\footnote{https://github.com/ChaoyuYuan/iclogcondist}.

The remainder of this paper is organized as follows. In Section \ref{sect:exist_unique_consistency}, we establish the existence, uniqueness, and consistency of the nonparametric maximum likelihood estimator of a log-concave underlying distribution function based on mixed-case interval-censored data. Section \ref{sect:computational_algo} details the proposed computational approach, which combines an active set algorithm with the ICM algorithm. In Section \ref{sect:simulation}, we present simulation studies to demonstrate the effectiveness of both the proposed estimator and the computational algorithm. Section \ref{sect:real_data} illustrates the application of the estimator using two real datasets. We conclude with a discussion in Section \ref{sect:discussion}. Proofs of theoretical results and supplementary simulation studies are included in the appendix.

\section{Estimation under log-concavity}\label{sect:exist_unique_consistency}
% \subsection{Log-likelihood function}
We adopt the standard assumptions of independent observations across subjects and an independent censoring mechanism \citep{sun2006statistical}. 
For each $i$, there is a random interval $(L_i, R_i] = (T_{i,J(i)-1}, T_{i,J(i)}]$ containing the event time $X_i$, where $1 \leq J(i) \leq m(i) + 1$. Suppose that $w_i$ observations share the same interval $(L_i, R_i]$ and there are $n$ such distinct intervals. The log-likelihood of the observed data in terms of a distribution function $F$ is given by
\begin{equation*}
    l(F) = \sum_{i=1}^n w_i \log (F(R_i) - F(L_i)),
\end{equation*}
with the convention that $\log(x) := -\infty$ for $x \leq 0$. We are interested in estimating the true distribution function $F_0$ under the constraint that $F_0$ is log-concave. Let $F = e^\psi$ for some concave function $\psi$ such that $\psi(0) = -\infty$ and $\psi(\infty) = 0$. The log-likelihood can be rewritten in terms of $\psi$ as
\begin{equation*}
    l(\psi) = \sum_{i=1}^n w_i \log (e^{\psi(R_i)} - e^{\psi(L_i)}).
\end{equation*}
Under a log-concavity on $F_0$, we maximize $l$ over
\begin{equation*}
   \mathcal{C} := \{\psi:[0,\infty)\rightarrow [-\infty,\infty) \,|\, \psi \text{ is concave},\, \psi(0) = -\infty,\, \psi(\infty)=0\}.
\end{equation*}
Define the set of all distinct time points
\begin{equation*}
    \{0, L_1, R_1,\ldots,L_n, R_n,\infty\} =\{\tau_0,\tau_1,\ldots, \tau_{m+1}\},
\end{equation*}
where $0 =: \tau_0 < \tau_1 <\tau_2 < \ldots < \tau_{m+1} :=\infty$. It is clear that the log-likelihood $l$ depends on $\psi$ only at the points $\tau_i$'s. Thus, the optimization problem requires further specification, 
% \sout{as there is some arbitrariness in how $\psi$ can be chosen between the $\tau_i$'s, while maintaining concavity.} 
as the choice of $\psi$ between the $\tau_i$'s to preserve concavity is not unique. To this end, we restrict our focus to maximizing $l$ over the subclass $\mathcal{C}_n \subset \mathcal{C}$, consisting of functions that are linear on each interval $[\tau_{i-1}, \tau_{i}]$, for $i=2,\ldots,m$. This requirement is analogous to the unconstrained estimation of the distribution function for interval censored data, where the unconstrained MLE is conventionally taken as a step function. This type of linear confinement is also common in convex or concave regression; see \cite{groeneboom2014nonparametric} for a detailed exposition on this and other shape-constrained problems.

% On the other hand, if one considers log-concave density estimation, we do not have such an arbitrariness because the constraint such that the function is a density forces we choose a ``minimal" concave function (which is piecewise linear). 

% \subsection{Log-concavity on distribution function}

As a result, we can identify $\psi \in \mathcal{C}_n$ with the $m$-dimensional vector $\phi = (\phi_j)_{j=1}^m := (\psi(\tau_j))_{j=1}^m$. For each $i = 1,\ldots,n$, there are indices $l(i)$ and $r(i)$ such that $    (L_i, R_i] = (\tau_{l(i)}, \tau_{r(i)}]$,
where $l(i)$ can be $0$ and $r(i)$ can be $m+1$. For a generic vector $y \in \mathbb{R}^m$, denote $\Delta y_i := y_i - y_{i-1}$ for $i=1,\ldots,m$ with $y_{0} := 0$.  Then, $l$ can be expressed as
\begin{equation*}
    l(\phi) = \sum_{i=1}^n w_i \log(e^{\phi_{r(i)}} - e^{\phi_{l(i)}}),
\end{equation*}
with $\phi_0 := -\infty$, $\phi_{m+1} = 0$, $\phi \in \mathcal{M} \cap \mathcal{LC}$, where
\begin{align}
    \mathcal{M} &:= \{\phi \in \mathbb{R}^m: \phi_1 \leq \ldots \leq \phi_m \leq 0\}, \label{eq:monotone_constraint} \\
    \mathcal{LC} &:= \left\{ \phi \in \mathbb{R}^m :  
    \frac{\Delta \phi_i}{\Delta \tau_i} \geq \frac{\Delta \phi_{i+1}}{\Delta \tau_{i+1}},
    % \frac{\phi_i - \phi_{i-1}}{\tau_i - \tau_{i-1}}  \geq \frac{\phi_{i+1} - \phi_{i}}{\tau_{i+1} - \tau_{i}}, 
    \quad i=2,\ldots,m-1 \right\} \label{eq:log_concave_constraint},
\end{align}
represent the monotonicity and log-concavity constraints on the corresponding distribution function, respectively. Note that if $\phi \in \mathcal{LC}$, we only need two additional constraints, $\phi_{m-1} \leq \phi_m$ and $\phi_m \leq 0$, such that $\phi$ is also in $\mathcal{M}$; see also Section \ref{sect:computational_algo}.
In Lemma \ref{lemma:log_like_concave}, we show that the log-likelihood function $l(\phi)$ is concave, though not strictly concave. Nevertheless, Theorem \ref{thm:unique_existence} shows that there exists a unique maximizer of $l$ subject to \eqref{eq:monotone_constraint} and \eqref{eq:log_concave_constraint}. Compared with estimating a concave distribution function in \cite{dumbgen2006estimating}, we parameterize the log-likelihood in terms of the log-distribution function, where the maximizer takes the value of negative infinity at some points.
\begin{lemma}\label{lemma:log_like_concave}
The log-likelihood $l(\phi)$ is concave but not strictly concave in $\phi$.
\end{lemma}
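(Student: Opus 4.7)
The plan is to reduce the lemma to a Hessian analysis of the individual summands of $l(\phi) = \sum_{i=1}^n w_i \log(e^{\phi_{r(i)}} - e^{\phi_{l(i)}})$. Each summand depends on at most two coordinates of $\phi$ once the boundary conventions $\phi_0 = -\infty$ and $\phi_{m+1} = 0$ are applied, so both concavity and the obstruction to strict concavity can be extracted from elementary $2 \times 2$ computations.

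For concavity, I would first verify that the underlying bivariate function $g(a,b) := \log(e^b - e^a)$ on $\{a < b\}$ is concave by a direct Hessian computation,
\[
\nabla^2 g(a,b) = \frac{e^{a+b}}{(e^b - e^a)^2}\begin{pmatrix} -1 & 1 \\ 1 & -1 \end{pmatrix},
\]
which is negative semidefinite with eigenvalues $0$ and $-2 e^{a+b}/(e^b-e^a)^2$. The boundary summands of $l$ are then handled in cases: if $l(i) = 0$ the summand collapses to the linear function $w_i \phi_{r(i)}$; if $r(i) = m+1$ with $l(i) \ge 1$ it becomes $w_i \log(1 - e^{\phi_{l(i)}})$, whose second derivative $-e^{\phi_{l(i)}}/(1 - e^{\phi_{l(i)}})^2$ is strictly negative; and if both boundary conventions apply, the summand is identically $0$. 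Since each summand is concave and the weights $w_i$ are nonnegative, $l(\phi)$ is concave.

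For the failure of strict concavity, I would exploit the translation invariance $g(a+t, b+t) = t + g(a,b)$, which is exactly what makes $\nabla^2 g$ rank-deficient with null vector $(1,1)^\top$. The goal is to promote this to a null direction of the full Hessian $\nabla^2 l$: a nonzero $v \in \mathbb{R}^m$ with $v_{l(i)} = v_{r(i)}$ for every interior summand (those with $1 \le l(i) < r(i) \le m$) and $v_{l(i)} = 0$ for every right-censored summand (with $r(i) = m+1$, where the summand is strictly concave in $\phi_{l(i)}$). For any such $v$, the restriction $t \mapsto l(\phi^* + tv)$ is affine on a neighborhood of $0$ at any interior feasible $\phi^*$, thereby certifying non-strict concavity.

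The hardest part is producing such a $v$. I would argue combinatorially: define an equivalence relation $\sim$ on $\{1,\ldots,m\}$ generated by the pairs $\{l(i), r(i)\}$ of interior summands, and call an index \emph{pinned} if it appears as $l(i)$ for some right-censored summand. Any $\sim$-equivalence class disjoint from the pinned indices yields a valid $v$ that is constant on the class and zero elsewhere. A mild non-degeneracy of the censoring pattern guarantees the existence of at least one such class, which suffices to rule out strict concavity.
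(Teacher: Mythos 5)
Your concavity argument is the same as the paper's: both reduce to the $2\times 2$ Hessian of $g(a,b)=\log(e^b-e^a)$, which is negative semidefinite with null vector $(1,1)^\top$; your extra case analysis of the boundary summands (linear when $l(i)=0$, strictly concave one\mbox{-}dimensional when $r(i)=m+1$, identically zero when both) is a harmless refinement of the paper's observation that $l$ is a nonnegative combination of such terms.

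Where you genuinely diverge is the non-strictness claim, and this is where the gap sits. The paper is content to exhibit an explicit affine segment for the \emph{summand} $g$ (two points related by the translation $g(a+t,b+t)=t+g(a,b)$, i.e.\ exactly the null vector you identify) and stops there; it never lifts this to a null direction of $l$ itself. You correctly recognize that the literal statement about $l$ requires such a lift, and your construction --- a vector $v$ constant on an equivalence class of indices generated by the interior pairs $\{l(i),r(i)\}$ and vanishing on the indices pinned by right-censored terms --- would indeed make $t\mapsto l(\phi^*+tv)$ affine. But the final step, that ``a mild non-degeneracy of the censoring pattern'' guarantees an unpinned class, is asserted rather than proved, and it can fail: for current status data in which every inspection time $\tau_j$ is observed with both outcomes, every class is the pinned singleton $\{j\}$, the log-likelihood is a sum of the strictly concave univariate functions $w\phi_j+w'\log(1-e^{\phi_j})$ in separate coordinates, and $l$ \emph{is} strictly concave. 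So no choice of $v$ exists there, and the combinatorial existence claim cannot be closed without an explicit hypothesis on the data configuration. You should either state and verify that hypothesis (e.g.\ the presence of at least one observation whose equivalence class avoids all pinned indices), or retreat to the weaker, summand-level statement that the paper actually proves. Your attempt is more faithful to the lemma as written than the published proof, but as it stands the decisive step is missing.
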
  
\begin{proof}
    See Appendix \ref{sect:proof}.
\end{proof}

\begin{theorem}\label{thm:unique_existence}
    There exists a unique maximizer  of $l$ subject to (\ref{eq:monotone_constraint}) and (\ref{eq:log_concave_constraint}).
\end{theorem}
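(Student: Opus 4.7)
My plan is to prove existence by a compactness argument in the extended parameter space $[-\infty, 0]^m$, and to prove uniqueness by combining Lemma~\ref{lemma:log_like_concave} with a precise description of the null directions of each summand of $l$.

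For existence, note first that $l(\phi) \leq 0$ on $\mathcal{M}\cap\mathcal{LC}$ since each $e^{\phi_{r(i)}} - e^{\phi_{l(i)}}\in(0,1]$, and exhibiting any feasible $\phi$ with finite, distinct coordinates shows $\sup l \in (-\infty, 0]$. For a maximizing sequence $\phi^{(k)}$, each coordinate $\phi_j^{(k)}$ lies in the compact interval $[-\infty, 0]$, so after passing to a subsequence $\phi^{(k)} \to \phi^\ast \in [-\infty,0]^m$ componentwise. The weak inequalities defining $\mathcal{M}$ and $\mathcal{LC}$ pass to the extended limit, so $\phi^\ast$ is feasible, and each map $\phi \mapsto \log(e^{\phi_{r(i)}} - e^{\phi_{l(i)}})$ is upper semicontinuous on $[-\infty,0]^m$ (continuous when the argument is positive and tending to $-\infty$ otherwise). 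Hence $l(\phi^\ast) \geq \limsup_k l(\phi^{(k)}) = \sup l$, so $\phi^\ast$ attains the supremum.

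For uniqueness, suppose $\phi^{(1)}, \phi^{(2)}$ are both maximizers and set $\Delta_j := \phi^{(2)}_j - \phi^{(1)}_j$, with $\Delta_{m+1}=0$. By Lemma~\ref{lemma:log_like_concave} and convexity of $\mathcal{M}\cap\mathcal{LC}$, the whole segment $\phi^{(\lambda)} := \lambda\phi^{(1)} + (1-\lambda)\phi^{(2)}$ is feasible and $l$ is constant along it. Writing $l = \sum_i w_i g_i$ with $g_i(\phi) := \log(e^{\phi_{r(i)}} - e^{\phi_{l(i)}})$ concave, a sum of concave functions that is constant on a segment forces every summand to be affine along that segment. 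A direct Hessian computation (the same one underlying Lemma~\ref{lemma:log_like_concave}) identifies the null directions: when $1 \leq l(i) < r(i) \leq m$, the only null direction of $g_i$ in $(\phi_{l(i)},\phi_{r(i)})$ is $(1,1)$, giving $\Delta_{l(i)} = \Delta_{r(i)}$; when $r(i) = m+1$, $g_i$ is strictly concave in $\phi_{l(i)}$, giving $\Delta_{l(i)} = 0$; and when $l(i) = 0$, $g_i$ is linear in $\phi_{r(i)}$ and imposes no equality on $\Delta$. Consequently $\Delta$ is constant on each connected component of the graph on $\{1,\ldots,m+1\}$ whose edges are $\{l(i),r(i)\}$ with $l(i)\geq 1$, and equals zero on every component containing $m+1$ or linked to it by an $r(i)=m+1$ edge.

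The main obstacle is ruling out nonzero $\Delta$ on \emph{ungrounded} components of this graph, which must be handled precisely because Lemma~\ref{lemma:log_like_concave} explicitly warns that $l$ is not strictly concave. I plan to close this gap via a first-order argument at the optimum: shifting $\phi$ upward by a constant $c$ on an ungrounded component $C$ contributes $+c$ to the linear term $\phi_{r(i)}$ from every observation with $l(i)=0$ and $r(i)\in C$, and so strictly increases $l$ unless either the upper bound $\phi_j \leq 0$ from $\mathcal{M}$ or a slope constraint from $\mathcal{LC}$ linking $C$ to a grounded vertex becomes active and pins $c = 0$. Because by construction every $\tau_j$ is some $L_i$ or $R_i$, every index appears in at least one edge, and a careful case analysis should show that any nonzero shift on $C$ either violates feasibility or strictly improves $l$, contradicting the optimality of $\phi^{(1)}$. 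Verifying that these grounding mechanisms close off every escape route for nonzero $\Delta$, in the presence of the intertwined monotonicity and log-concavity constraints, is the most delicate step of the proof.
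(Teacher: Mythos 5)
Your existence argument (compactness of $[-\infty,0]^m$ plus upper semicontinuity of each summand) is a legitimate alternative to the paper's route, which instead shows coercivity: after forcing $\hat\phi_i=-\infty$ for the initial indices, the likelihood contains the term $\phi_{s^*+1}$, which tends to $-\infty$ whenever the remaining coordinates blow up, so a maximizer exists on the closed constraint set. The first half of your uniqueness argument also matches the paper: $l$ is constant on the segment joining two maximizers, each concave summand must be affine along it, and the Hessian computation yields exactly the paper's equation \eqref{eq:D_unique}, i.e.\ $D_{r(i)}=D_{l(i)}$, together with $D_{l(i)}=0$ when $r(i)=m+1$ and no information from the linear terms.

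The genuine gap is the step you yourself flag as unfinished, and the mechanism you sketch for it does not close. Shifting $\phi$ by a constant $c$ \emph{only} on the coordinates of an ungrounded component $C$ is in general not a feasible perturbation: the inequalities in $\mathcal{LC}$ and $\mathcal{M}$ couple consecutive indices $j-1,j,j+1$ irrespective of which graph component they belong to, and at any non-knot index the relevant inequality is already an equality, so no $c\neq 0$ is admissible. Infeasibility of that particular move is not a contradiction with optimality; the actual difference $D$ of the two maximizers is a feasible direction, but it is constant on each component with constants that can trade off against one another, and first-order stationarity along $D$ gives only the single relation $\sum_C c_C\sum_{i:\,l(i)=0,\,r(i)\in C}w_i=0$, which does not force each $c_C=0$. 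What is missing is precisely the ingredient the paper uses to finish: the piecewise-linear (knot) structure of the maximizers. The paper takes the smallest index $a$ with $D_a\neq 0$ and the maximal block $\{a,\dots,b\}$ on which both maximizers have constant slope, so that $D$ is linear there with nonzero slope (as $D_{a-1}=0$); it then argues that the slope change at $\tau_b$ forces some observation to satisfy $r(i_0)=b$ — otherwise $\phi_b$ appears only as a subtrahend and could be lowered toward the interpolated value of its neighbours, strictly increasing $l$ — and the linearity of $D$ on the block then yields $D_{r(i_0)}-D_{l(i_0)}\neq 0$, contradicting \eqref{eq:D_unique}. (The paper also replaces the $-\infty$ coordinates by a finite level $M$ before differentiating, which makes the formerly linear terms strictly concave and grounds the corresponding components outright.) Your component bookkeeping never interacts with the knots, and without that interaction the nonzero constants on ungrounded components cannot be ruled out.
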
 
\begin{proof}
    See Appendix \ref{sect:proof}.
\end{proof}

% \begin{remark}
%     The $\phi_i$ correspond to any $L_i$ such that $L_i < \min_i R_i$ is $-\infty$. Some of the terms in the log-likelihood will depend on terms of the form $\log e^{\phi(R_i)} = \phi(R_i)$.
% \end{remark}

Let $\hat{\phi}_{lc} := \argmax_{\phi \in \mathcal{M} \cap \mathcal{LC}}l(\phi)$. The function estimate of $\log F_0$ is given by 
\begin{equation*}
    \hat{\phi}_{lc}(t) =  \frac{\tau_{i+1}-t}{\tau_{i+1} - \tau_i} \hat{\phi}_{lc}(\tau_i) + 
    \frac{t - \tau_i}{\tau_{i+1} - \tau_i} \hat{\phi}_{lc}(\tau_{i+1}), \quad t \in [\tau_i, \tau_{i+1}],
\end{equation*}
for $i=1,\ldots,m-1$, and $\hat{\phi}_{lc}(t) = -\infty$ for $t \in [0, \tau_1)$. We leave $\hat{\phi}_{lc}(t)$ undetermined for $t > \tau_m$, similar to the unconstrained case \citep{groeneboom1992information}. The corresponding estimator of $F_0$ is defined as $\hat{F}_{lc}(t) := e^{\hat{\phi}_{lc}(t)}$.

To introduce the consistency result in Theorem \ref{thm:consistency} below, we consider a triangular scheme of observations as in \cite{dumbgen2006estimating}, where $m(i) = m^{(n)}(i)$, $T_{ij} = T_{ij}^{(n)}$ and $Y_{ij} = Y_{ij}^{(n)}$. The inspection times are treated as arbitrary fixed numbers. Let $M_n$ denote the empirical distribution of our inspection times, namely
    \begin{equation*}
        M_n(t) := \frac{1}{n}\sum_{i=1}^n \left( \frac{1}{m^{(n)}(i)}\sum_{j=1}^{m^{(n)}(i)} \mathbbm{1}(T_{ij}^{(n)} \leq t)\right).
    \end{equation*}
% \sout{Recall $F_0$ denotes the true distribution function.} 
Let $\mathcal{F}_{lc}$ denotes the set of all log-concave distribution functions on $[0,\infty)$. By Theorem 3 in \cite{dumbgen2006estimating}, we obtain the consistency of the log-concave MLE.
\begin{theorem}\label{thm:consistency}
    Suppose that $F_0 \in \mathcal{F}_{lc}$ and $\frac{1}{n}\sum_{i=1}^n [m^{(n)}(i)]^\gamma = O(1)$ for some $\gamma \in (1/2, 1]$. Then,
    \begin{equation}\label{eq:consistency}
    \int |\hat{F}_{lc} - F_0| \, dM_n \stackrel{p}{\rightarrow} 0.
    \end{equation}
    This implies that, for any $0 \leq s \leq t \leq \infty$ for which $\liminf_{n \rightarrow \infty} (M_n(t) - M_n(s-))> 0$, we have
    \begin{equation*}
        \hat{F}_{lc}(s) \leq F_0(t) + o_p(1) \quad \text{ and }\quad \hat{F}_{lc}(t) \geq F_0(s) + o_p(1).
    \end{equation*}
\end{theorem}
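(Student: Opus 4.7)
The approach is to invoke Theorem~3 of \cite{dumbgen2006estimating}, an abstract consistency criterion for shape-constrained NPMLEs of a distribution function from mixed-case interval-censored data. The bulk of the work reduces to verifying that the class $\mathcal{F}_{lc}$ fits the structural hypotheses of that theorem; once this is done, \eqref{eq:consistency} follows directly.

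First I would establish the two ingredients that any such application requires. Since $F_0 \in \mathcal{F}_{lc}$ by assumption, the MLE inequality $l(\hat{F}_{lc}) \geq l(F_0)$ is automatic. The second ingredient is closure of $\mathcal{F}_{lc}$ under pointwise convergence, so that a Helly-type extraction stays inside the class: if $F_k \in \mathcal{F}_{lc}$ and $F_k \to \tilde{F}$ pointwise, then on $\{\tilde{F}>0\}$ we have $\log F_k \to \log \tilde{F}$, and because the pointwise limit of concave functions is concave wherever finite, $\log \tilde{F}$ is concave there; extending by $-\infty$ on $\{\tilde{F}=0\}$ preserves concavity, so $\tilde{F} \in \mathcal{F}_{lc}$. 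The strategy of \cite{dumbgen2006estimating} then applies almost verbatim: if \eqref{eq:consistency} fails, extract a subsequence along which $\hat{F}_{lc}$ converges pointwise to some $\tilde F \in \mathcal{F}_{lc}$ with $\tilde F \neq F_0$ on a set of positive $M_n$-mass, and derive a contradiction with the likelihood inequality via the concavity of the log-likelihood and the $\gamma$-moment control on $m^{(n)}(i)$.

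Second, the pointwise bounds follow from monotonicity. Suppose along a subsequence $\hat{F}_{lc}(s) > F_0(t) + \varepsilon$ for some $\varepsilon > 0$. Since both $\hat{F}_{lc}$ and $F_0$ are nondecreasing, for every $u \in [s,t]$ we have $\hat{F}_{lc}(u) - F_0(u) \geq \hat{F}_{lc}(s) - F_0(t) > \varepsilon$, so
\begin{equation*}
\int |\hat{F}_{lc} - F_0|\, dM_n \;\geq\; \varepsilon\,\bigl(M_n(t) - M_n(s-)\bigr),
\end{equation*}
which stays bounded away from zero under the $\liminf$ hypothesis, contradicting \eqref{eq:consistency}. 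The symmetric argument yields $\hat{F}_{lc}(t) \geq F_0(s) + o_p(1)$.

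The main technical obstacle is the behavior at the lower boundary where $F$ vanishes: the parameterization $\psi = \log F$ takes values in $[-\infty,0]$ and indeed $\hat{\phi}_{lc}(t) = -\infty$ on $[0,\tau_1)$. One must check that log-concavity is genuinely preserved across this degenerate set under pointwise limits, and that the likelihood-comparison step from \cite{dumbgen2006estimating}, originally written for concave distribution functions, remains valid for the log-concave class despite the $-\infty$ values. I expect this to be handled by restricting attention to the effective support of $M_n$, on which $\log \hat{F}_{lc}$ and $\log F_0$ both remain finite, and by approximating the boundary by a monotone sequence of truncated log-concave distribution functions.
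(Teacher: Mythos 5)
Your proposal is correct and follows essentially the same route as the paper: the paper's proof simply invokes Theorem~3 of \cite{dumbgen2006estimating} with $\mathcal{F}=\mathcal{F}_{lc}$, notes that the moment assumption gives their Condition~(C.1), and obtains the pointwise bounds by ``the same argument that leads to equation (16)'' there, which is exactly your monotonicity argument. Your additional remarks---closure of $\mathcal{F}_{lc}$ under pointwise limits and care at the set where $F$ vanishes---are correct points of verification that the paper leaves implicit.
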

\begin{proof}
    See Appendix \ref{sect:proof}.
\end{proof}

\section{Computational algorithm}\label{sect:computational_algo}
\subsection{Preliminary step}\label{subsect:prelim}

Define the index set
\begin{equation}\label{eq:mathcal_L}
    \mathcal{L} := \left\{i \in\{1,\ldots,n\}: L_i < \min_{j=1,\ldots,n} R_j\right\}.
\end{equation}
Note that $\mathcal{L}$ is not empty since $L_{i^*} < R_{i^*}$, where $i^* := \argmin_{j=1,\ldots,n} R_j$.
For $i \in \mathcal{L}$, the values of $\phi(L_i)$ appear only in expressions of the form $\log(e^{\phi(R_i)} - e^{\phi(L_i)})$ in the log-likelihood, and the $L_i$'s for $i \in \mathcal{L}$ are also among the earliest observed times. It follows that the maximizer $\hat{\phi}$ of $l$ must satisfy the condition $\hat{\phi}(L_i) = -\infty$.  Equivalently, $\hat{\phi}_i = -\infty$ for $i=1,\ldots,s^*$, where $s^* := |\{L_i : i \in \mathcal{L}\}|$. Any value of $\phi(L_i) \neq -\infty$ for $i \in \mathcal{L}$ results in a strictly smaller log-likelihood. 
% \sout{Recall the definition of the index set $\mathcal{L}$ in \eqref{eq:mathcal_L}.
% Without loss of generality, we only need to maximize $l$ under the condition that $\phi(L_i) = -\infty$ for $i \in \mathcal{L}$. }
Accordingly, we redefine the set of distinct time points as:
\begin{equation*}
    \{ \tau_1,\ldots, \tau_m,\tau_{m+1}\} = \{ L_1,R_1,\ldots,L_n,R_n,\infty\} \, \backslash \, \{L_i: i \in \mathcal{L}\},
\end{equation*}
where $\tau_{m+1} = \infty$. For simplicity, we continue to denote the total number of these time points by $m$, reusing the notation from before. Define another index set $\mathcal{R} := \{i :r(i) = m+1\}$. For $i \in \mathcal{R}$, $\phi_{r(i)} = 0$. 
The log-likelihood can then be expressed as:
\begin{align*}
    l(\phi) 
    % &= \sum_{i \in \mathcal{L}} w_i \phi_{r(i)} + 
    % \sum_{i \in \mathcal{L}^c} w_i \log \left( e^{\phi_{r(i)}} - e^{\phi_{l(i)}}\right) \\
    &= \sum_{i \in \mathcal{L} \cap \mathcal{R}^c} w_i \phi_{r(i)} +
    \sum_{i \in \mathcal{L}^c \cap \mathcal{R}} w_i \log \left( 1 - e^{\phi_{l(i)}}\right) +
    \sum_{i \in \mathcal{L}^c \cap \mathcal{R}^c} w_i \log \left( e^{\phi_{r(i)}} - e^{\phi_{l(i)}}\right),
\end{align*}
since for $i \in \mathcal{L} \cap \mathcal{R}$, we have $\log(e^{\phi(R_i)} - e^{\phi(L_i)}) = 0$.

% \begin{remark}
% Note that $\mathcal{R}$ can be an empty set if all $R_i$'s are finite. In this case, the log-likelihood simplifies to
% \begin{equation*}
%     l(\phi) = \sum_{i \in \mathcal{L}} w_i \phi_{r(i)} +
%     \sum_{i \in \mathcal{L}^c} w_i \log \left( e^{\phi_{r(i)}} - e^{\phi_{l(i)}}\right).
% \end{equation*}    
% \end{remark}

\subsection{Active set algorithm}
The active set method is an optimization technique used for solving constrained optimization problems by identifying active inequality constraints and treating them as equality constraints. This allows for the solution of a simpler equality-constrained subproblem. An example of its application in statistics is the nonparametric estimation of a log-concave density based on uncensored data \citep{dumbgen2007active}. 

Recall that our log-concave MLE is a piecewise linear function, which can be completely determined by its values at the knots, namely, the points where the slope of the function changes. These knots correspond to the inactive constraints in the optimization problem. In various estimation problems involving concave or log-concave functions, shape-constrained estimators typically have only a small fraction of knots relative to the sample size. For instance, for $n = 1000$, the log-concave MLE based on uncensored standard normally distributed observations has, on average, about 10 knots, while for $n = 100000$, there are approximately 30 knots \citep{wang2018computation}. A similar conclusion holds for our application as we illustrate in Section \ref{sect:simulation}.
Consequently, an active set algorithm is particularly well-suited for such optimization problems, as it only requires solving subproblems where the number of variables corresponds to the number of knots. To this end, we follow the framework of the active set algorithm as described in \cite{dumbgen2007active}.

Let $\text{dom}(l) := \{\phi \in \mathbb{R}^m : l(\phi) > -\infty\}$.
    For $i=2,\ldots,m-1$, let $v_i = (v_{ij})_{j=1}^m$ be a vector in $\mathbb{R}^m$ with exactly three nonzero components:
\begin{equation*}
    v_{i,i-1} := \frac{1}{\Delta \tau_i}, \quad v_{i,i} := - \left(\frac{1}{\Delta \tau_{i+1}} + \frac{1}{\Delta \tau_i} \right), \quad  v_{i,i+1} := \frac{1}{\Delta \tau_{i+1}}.
\end{equation*}
Let $v_m \in \mathbb{R}^m$ with exactly two nonzero components: $v_{m,m-1} = 1$ and $v_{m,m} = -1$. The constraint set $\mathcal{M} \cap \mathcal{LC}$ can be written as $\mathcal{K} \cap \text{dom}(l)$, where
\begin{equation*}
    \mathcal{K} := \{ \phi \in \mathbb{R}^m: v_i^\top \phi \leq 0,\, i=2,\ldots,m\}.
\end{equation*}
Note that there are only $m-1$ constraints and the index in $\mathcal{K}$ begins at $i = 2$ is for convenience, as $v_i^\top \phi = 0$ corresponds to the scenario where $\phi_{\tau_i}$ is not a knot for $i=2,\ldots,m$. Thus, the indices for the constraints in $\mathcal{K}$ are $2,\ldots,m$.
 For any $\phi \in \mathbb{R}^m$, define the index set
\begin{equation*}
    A(\phi) := \{ i \in\{2,\ldots,m\}: v_i^\top \phi \geq 0\}.
\end{equation*}
The following theorem, based on standard convex optimization theory, provides a characterization of the maximizer using a set of basis vectors, which will be used to determine which knots to be added as well as the stopping criterion in the algorithm.
\begin{theorem}[Theorem 3.1 in \cite{dumbgen2007active}]\label{thm:characterization}
    Let $b_1,\ldots,b_m$ be a set of basis vectors of $\mathbb{R}^m$ such that $v_i^\top b_j < 0$ if $i = j$ and $v_i^\top b_j = 0$ for $i\neq j$. A vector $\phi \in \mathcal{K} \cap \text{dom}(l)$ is the maximizer of $l$ if and only if 
    \begin{equation}\label{eq:characterizatoin}
    b_i^\top \nabla l(\phi) 
    \begin{cases}
    =0, & \text{ for all } i \in \{1,\ldots,m\} \, \backslash \, A(\phi); \\
    \leq 0, & \text{ for all } i \in A(\phi).
    \end{cases}
    \end{equation}
\end{theorem}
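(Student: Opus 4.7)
The plan is to apply the standard first-order optimality criterion for maximizing a concave function over a convex set: since $l$ is concave (Lemma \ref{lemma:log_like_concave}) and $\mathcal{K}\cap\text{dom}(l)$ is convex, a feasible $\phi$ maximizes $l$ iff the directional derivative $\nabla l(\phi)^\top d$ is non-positive for every feasible direction $d$, that is, every $d$ such that $\phi+td\in\mathcal{K}\cap\text{dom}(l)$ for all sufficiently small $t>0$. The role of the biorthogonal basis $\{b_1,\ldots,b_m\}$ is to diagonalize this condition so that it can be read off coefficient by coefficient.

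The first step is to identify the cone of feasible directions at $\phi$. Since $\text{dom}(l)$ is open in $\mathbb{R}^m$ (the arguments of the logarithms in $l$ depend continuously on $\phi$ and are strictly positive on $\text{dom}(l)$), and since inactive constraints remain slack under sufficiently small perturbations, the only binding requirements on $d$ are $v_i^\top d\leq 0$ for $i\in A(\phi)$. Expanding $d=\sum_{j=1}^m c_j b_j$ and using the defining relations $v_i^\top b_j=0$ for $i\neq j$ together with $v_i^\top b_i<0$, each binding inequality $v_i^\top d\leq 0$ collapses to $c_i\geq 0$. Meanwhile $c_1$, which is associated with no constraint (the index set for $\mathcal{K}$ begins at $2$), and the coefficients $c_i$ for $i\in\{2,\ldots,m\}\setminus A(\phi)$ are unrestricted.

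Substituting into the directional derivative gives $\nabla l(\phi)^\top d=\sum_{j=1}^m c_j\, b_j^\top\nabla l(\phi)$. Requiring non-positivity for every feasible choice of coefficients forces $b_j^\top\nabla l(\phi)=0$ for $j\notin A(\phi)$ (since the corresponding $c_j$ range over all of $\mathbb{R}$) and $b_i^\top\nabla l(\phi)\leq 0$ for $i\in A(\phi)$ (since $c_i\geq 0$), which is exactly \eqref{eq:characterizatoin}. Concavity of $l$ upgrades this necessary condition to a sufficient one. The main bookkeeping is checking that the active/inactive dichotomy really does capture the tangent cone at boundary points of $\text{dom}(l)$ (reducing to openness of $\text{dom}(l)$ in $\mathbb{R}^m$) and verifying that the biorthogonal basis exists at all, which in turn requires linear independence of $v_2,\ldots,v_m$; this is immediate from their tridiagonal structure with nonzero diagonal entries, after which $b_1$ is any nonzero vector orthogonal to $v_2,\ldots,v_m$ and the $b_i$ for $i\geq 2$ are determined up to a positive scalar by the biorthogonality relations.
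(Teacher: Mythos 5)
Your argument is correct and is essentially the standard first-order-optimality argument underlying the cited result (Theorem 3.1 of \cite{dumbgen2007active}), which the paper invokes without reproducing a proof: concavity of $l$, the polyhedral structure of $\mathcal{K}$, and openness of $\mathrm{dom}(l)$ reduce optimality to non-positivity of $\nabla l(\phi)^\top d$ over the feasible-direction cone $\{d: v_i^\top d \leq 0,\ i \in A(\phi)\}$, and the biorthogonal basis diagonalizes that cone exactly as you describe, with concavity supplying sufficiency. I see no gaps; your added checks (openness of $\mathrm{dom}(l)$, linear independence of $v_2,\ldots,v_m$, and the unconstrained role of $c_1$) are the right ones.
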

A suitable set of basis vectors of $\mathbb{R}^m$ that satisfies the conditions in Theorem \ref{thm:characterization} is given by $b_1 := (1)_{i=1}^m$, and for $j = 2,\ldots,m$,
\begin{equation*}
    b_j := (\min(\tau_i - \tau_j, 0))_{i=1}^m.
\end{equation*}
It can be readily checked that $v_i^\top b_j = 0$ for $i \neq j$ and $v_i^\top b_i < 0$ for $i=2,\ldots,m$. 
\begin{remark}
    Compared to estimating a log-concave density with uncensored data, we have one additional constraint, $\phi_{m-1} \leq \phi_m$. Thus, the set of basis vectors is also modified accordingly.
\end{remark}
For $A \subseteq \{2,\ldots,m\}$, let 
\begin{equation*}
    \mathcal{V}(A) := \{\phi \in \mathbb{R}^m : v_a^\top \phi = 0  \text{ for all } a \in A\}.
\end{equation*}
% Note that the effective dimension of the variables in $\mathcal{V}(A)$ is simply the number of knots, which 
The active set algorithm performs two basic procedures alternatively. The first procedure is to replace a feasible point $\phi \in \mathcal{K} \cap \text{dom}(l)$ with a conditionally optimal one. Set $A = A(\phi)$. We first solve
\begin{equation*}
\phi_{\text{cand}} = \tilde{\phi}(A) := \argmax_{\phi \in \mathcal{V}(A)\cap \mathcal{M}} l(\phi),
\end{equation*}
using the ICM algorithm as described in Subsection \ref{subsect:solving_subproblem}. As $\phi_{\text{cand}}$ may not be in $\mathcal{K}$, we replace $\phi$ by $(1-t) \phi + t \phi_{\text{cand}}$, where 
\begin{equation*}
    t = t(\phi, \phi_{\text{cand}}) := \max\{t \in (0, 1): (1-t)\phi+t \phi_{\text{cand}} \in \mathcal{K}\}.
\end{equation*}
Then, we repeat the procedure until $\phi_\text{cand} \in \mathcal{K}$. See Procedure \ref{algo:adjust} for the algorithmic steps. The second procedure is to alter the set of active constraints by adding a knot $a_0$ that corresponds the largest positive $b_{a_0}^\top \nabla l(\phi)$, based on the characterization in (\ref{eq:characterizatoin}). The active set algorithm iterates these two procedures until a convergence criterion is met; see Algorithm \ref{algo:full}.

\captionsetup[algorithm]{name=Procedure}  % Temporarily change "Algorithm" to "Procedure"
\begin{algorithm}[H]
\caption{Pseudo-code of finding a conditionally optimal one}
\label{algo:adjust}
\begin{algorithmic}%[1]
\STATE $\phi_{\text{cand}} \gets \tilde{\phi}(A)$
\WHILE{$\phi_{\text{cand}} \notin \mathcal{K}$}
        \STATE $\phi \gets (1 - t(\phi, \phi_{\text{cand}})) \phi + t(\phi, \phi_{\text{cand}}) \phi_{\text{cand}}$
        \STATE $A \gets A(\phi)$
        \STATE $\phi_{\text{cand}} \gets \tilde{\phi}(A)$
    \ENDWHILE
        \STATE $\phi \gets \phi_{\text{cand}}$
    \STATE $A \gets A(\phi)$
\end{algorithmic}
\end{algorithm}
\captionsetup[algorithm]{name=Algorithm}  % Restore "Algorithm" for future algorithms

\begin{algorithm}[H]
\caption{Pseudo-code of an active set algorithm for computing $\hat{\phi}_{lc}$}
\label{algo:full}
\begin{algorithmic}%[1] 
\STATE $\phi \gets \phi_0$
\STATE $A \gets A(\phi)$

\STATE Apply Procedure \ref{algo:adjust}
\WHILE{$\max_{a \in A} b_a^T \nabla l(\phi) > 0$}
    \STATE $a \gets \min \left(\arg\max_{a \in A} b_a^T \nabla l(\phi)\right)$
    \STATE $A \gets A \setminus \{a\}$

\STATE Apply Procedure \ref{algo:adjust}
\ENDWHILE
\end{algorithmic}
\end{algorithm}

We now provide the details of applying this active set algorithm to our problem. First, an initial value $\phi_0$ is needed. A natural choice is to find the unconstrained MLE and evaluate the least concave majorant of its logarithm at $\tau_i$'s. This vector lies in $\mathcal{K}\cap \text{dom}(l)$, typically has only a few knots, and is not too far away from the optimal value. While the unconstrained MLE itself could also be an initial value, it generally requires more iterations to adjust the candidate solution $\phi_{\text{cand}}$ to lie within $\mathcal{K}$ in the first time of Procedure 1. The unconstrained MLE can be efficiently computed using the EMICM algorithm implemented in \cite{anderson2017icenreg}. The computation of the least concave majorant can be done using the pool adjacent violators algorithm (PAVA) in $O(n)$ time \citep{busing2022monotone}. 

The detail of computing $\tilde{\phi}(A)$ is given in Subsection \ref{subsect:solving_subproblem}. In the active set algorithm, we have to compute $\tilde{\phi}(A)$ multiple times for different $A$. The initial value for the ICM step is taken to be the current value of $\phi$.

In Procedure 1, $t(\phi, \phi_{\text{cand}})$ can be computed as
\begin{equation*}
    t(\phi, \phi_{\text{cand}}) = \min \left\{ \frac{-v_i^\top \phi}{v_i^\top \phi_{\text{cand}} - v_i^\top \phi} : 2 \leq i \leq m-1, v_i^\top \phi_{\text{cand}} > 0\right\}.
\end{equation*}
This does not involve the $m$th constraint because both $\phi$ and $\phi_{\text{cand}}$ always satisfy $v_m^\top \phi \leq 0$ and $v_m^\top \phi_{\text{cand}} \leq 0$.

The active set algorithm determines which knot to add and when to stop by computing $b_a^\top \nabla L(\phi)$ for $a \in A$, $A \in \{2,\ldots,m\}$. A naive implementation requires $O(n^2)$ computation as $m = O(n)$. To compute  $b_a^\top \nabla L(\phi)$ efficiently for all $a \in A$, we note that for any vector $x \in \mathbb{R}^m$, 
\begin{align*}
    b_2^\top x &= (\tau_1 - \tau_2) x_1, \\
    b_j^\top x &= b_{j-1}^\top x + (\tau_{j-1} - \tau_j) \sum_{i=1}^{j-1}x_i,
\end{align*}
for $j=3,\ldots,m$. Based on this recursive relation, we can compute $b_a^\top \nabla l(\phi)$  for all $a \in A$ in $O(n)$ computations. 

To decide the convergence of the algorithm, we can set a small tolerance level $\eta > 0$ such that if $\max_{a \in A} b_a^\top \nabla l(\phi) < \eta$, we stop. Alternatively, we can also stop when the change of log-likelihood between consecutive iterations is smaller than $\varepsilon$. In practice, both criteria will give nearly the same results.

\subsection{Details of solving $\tilde{\phi}(A)$}\label{subsect:solving_subproblem}
% we compute the maximizer using  the iterative convex minorant algorithm (ICM) \citep{jongbloed1998iterative}. Note that for $\phi \in \mathcal{V}(A)$, the likelihood can be simplified to depend on values of the knots of the vector only. Thus, the corresponding ICM only solves a problem with a small number of variables. The ICM involves quadratic approximation of the objective function and then solving an isotonic regression iteratively, where both computations can be computed in $O(n)$. See Subsection  for details. 
Let $A \subset\{2,\ldots,m\}$ and $\phi \in \mathcal{V}(A)$. Recall that the inactive constraints, $v_a^\top \phi < 0 $ for $a \notin A$, correspond to the knots of $\phi$, which is indexed by  $I := \{1,\ldots,m\} \,\backslash \, A$.
Write $I = \{i(1),\ldots,i(k)\}$, where $1=i(1) < \dots < i(k)$.
% \sout{If $m \notin A$, then $m^* = m$.} 
The values of $\phi(\tau_i)$ for $i \notin I$ are determined by the values of $\phi(\tau_i)$ for $i \in I$. Namely, for $\tau_i \in [\tau_{i(s)}, \tau_{i(s+1)}]$, we have
\begin{equation*}
    \phi(\tau_i) = q_{s,i} \phi(\tau_{i(s)}) + (1-q_{s,i}) \phi(\tau_{i(s+1)}),
\end{equation*}
where for $s=1,\ldots,k-1, i=i(s),\ldots,i(s+1)$,
\begin{equation*}
    q_{s,i} := \frac{\tau_{i(s+1)} - \tau_i}{\tau_{i(s+1)} - \tau_{i(s)}}.
\end{equation*}
For $\tau_i > \tau_{i(k)}$, we have
\begin{equation*}
    \phi(\tau_i) = \phi(\tau_{i(k)}).
\end{equation*}

Let $\bar{\phi} := (\phi_{i(1)},\ldots,\phi_{i(k)})$ and $l_{\text{red}}$ be the log-likelihood in terms of the reduced set such that $l_{\text{red}}(\bar{\phi}) = l(\phi)$. Now, we maximize $l_{\text{red}}(\bar{\phi})$ subject to the constraint $\phi_{i(1)} \leq \ldots \leq \phi_{i(k)} \leq 0$ using the (modified) ICM algorithm \citep{jongbloed1998iterative}. In the ICM algorithm, the negative log-likelihood is approximated by a quadratic approximation where the Hessian matrix of the negative log-likelihood is replaced by a diagonal positive definite matrix $D = (d_s)_{s=1}^k$. The following function is then minimized using the PAVA:
\begin{equation*}
    \sum_{s=1}^k d_s \left( \phi_{i(s)} - \left(\phi_{i(s)}^o + d_s^{-1} \frac{\partial l_{\text{red}}(\bar{\phi}^o)}{\partial \phi_{i(s)}} \right) \right)^2
\end{equation*}
subject to $\phi_{i(1)} \leq \ldots \phi_{i(k)} \leq 0$, where $\bar{\phi}^o$ is the value of the previous iterate. The ICM algorithm iterates the quadratic approximation and PAVA until convergence. Backtracking is used to ensure the decrease in negative likelihood and hence convergence of the algorithm \citep{jongbloed1998iterative}.
% To apply the ICM algorithm, we need to find the gradient $\nabla l_{\text{red}}(\bar{\phi})$ as well as a positive definite diagonal matrix. {\color{red}write down the weighted isotonic..}
For $s=1,\ldots,k$, we have
\begin{align*}
    \frac{\partial l_{\text{red}}(\bar{\phi})}{\partial \phi_{i(s)}} &= \sum_{j=1}^m \frac{\partial l(\phi)}{\partial \phi_j} \cdot \frac{\partial \phi_j}{\partial \phi_{i(s)}}, \\
      \frac{\partial^2 l_{\text{red}}(\bar{\phi})}{\partial \phi_{i(s)}^2} & 
      %= \sum_{j=1}^m \left( \sum_{j'=1}^m \frac{\partial^2 l(\phi)}{\partial \phi_j \partial \phi_j'} \cdot \frac{\partial \phi_{j'}}{\partial \phi_{i(s)}} \right) \cdot \frac{\partial \phi_j}{\partial \phi_{i(s)}} 
      %
      = \sum_{j=1}^m  \sum_{j'=1}^m \frac{\partial^2 l(\phi)}{\partial \phi_j \partial \phi_{j'}} \cdot \frac{\partial \phi_j}{\partial \phi_{i(s)}}  \cdot \frac{\partial \phi_{j'}}{\partial \phi_{i(s)}},
\end{align*}
where the last equality follows as $\frac{\partial^2 \phi_j}{\partial \phi_{i(s)}^2} = 0$ because $\phi_j$ is linear in $\phi_{i(s)}$.
The first order partial derivatives are
\begin{align*}
\frac{\partial l_{\text{red}}(\bar{\phi})}{\partial \phi_{i(1)}} &
% = \sum_{j=i(1)}^{i(2)-1} \frac{\partial l(\phi)}{\partial \phi_j} \cdot \frac{\partial \phi_j}{\partial \phi_{i(1)}} 
=  \sum_{j=i(1)}^{i(2)-1} \frac{\partial l(\phi)}{\partial \phi_j}
q_{1,j},
\\
    \frac{\partial l_{\text{red}}(\bar{\phi})}{\partial \phi_{i(s)}} &
    % = \sum_{j=i(s-1)+1}^{i(s+1)-1} \frac{\partial l(\phi)}{\partial \phi_j} \cdot \frac{\partial \phi_j}{\partial \phi_{i(s)}} 
    = 
    \sum_{j=i(s-1)+1}^{i(s)-1} \frac{\partial l(\phi)}{\partial \phi_j}(1-q_{s-1,j}) + \sum_{j=i(s)}^{i(s+1)-1} \frac{\partial l(\phi)}{\partial \phi_j} q_{s,j}
    , \quad s=2,\ldots,k-1, \\
    \frac{\partial l_{\text{red}}(\bar{\phi})}{\partial \phi_{i(k)}} &
    % = \sum_{j=i(k-1)+1}^{i(k)} \frac{\partial l(\phi)}{\partial \phi_j} \cdot \frac{\partial \phi_j}{\partial \phi_{i(k)}} 
    = 
    \sum_{j=i(k-1)+1}^{i(k)} \frac{\partial l(\phi)}{\partial \phi_j} \cdot (1-q_{k-1,j}) + \sum_{j = i(k)+1}^m \frac{\partial l(\phi)}{\partial \phi_j}.
\end{align*}
If $i(k) = m$, then the summation $\sum_{j=i(k)+1}^m$ has no terms and is equal to $0$.
Note that the majority of the mixed partial derivatives $\frac{\partial^2 l(\phi)}{\partial \phi_j \partial \phi_{j'}}$ for $j\neq j'$ is $0$. Therefore, we define $d_1,\dots,d_k$ as $-\frac{\partial^2 l_{\text{red}}(\bar{\phi})}{\partial \phi_{i(s)}^2}$ without the mixed partial derivatives:
\begin{equation*}
    d_s := -\sum_{j=1}^m \frac{\partial^2 l(\phi)}{\partial \phi_j^2 } \left(\frac{\partial \phi_j}{\partial \phi_{i(s)}} \right)^2.
\end{equation*}
They simplify into
\begin{align*}
    d_1 &= -\sum_{j=i(1)}^{i(2)-1} \frac{\partial^2 l(\phi)}{\partial \phi_j^2} q_{1,j}^2, \\
    d_s &= -\sum_{j=i(s-1)+1}^{i(s)-1} \frac{\partial^2 l(\phi)}{\partial \phi_j^2 }(1-q_{s-1,j})^2 - \sum_{j=i(s)}^{i(s+1)-1} \frac{\partial^2 l(\phi)}{\partial \phi_j^2} q_{s,j}^2, \quad s=2,\ldots,k-1, \\
    d_k &= -\sum_{j=i(k-1)+1}^{i(k)} \frac{\partial^2 l(\phi)}{\partial \phi_j^2}  (1-q_{k,j})^2 - \sum_{j=i(k)+1}^m  \frac{\partial^2 l(\phi)}{\partial \phi_j^2}.
\end{align*}

Noting that $r(i) > l(i)$, if $r(i) = j$, we have $l(i)\neq j$ and vice versa, if $l(i) = j$, we have $r(i)\neq j$. Hence, the first and second derivatives of $l(\phi)$ with respect to $\phi$ can be computed as follows:
\begin{align*}
    \frac{\partial l(\phi)}{\partial \phi_j} 
    % &= \sum_{i \in \mathcal{L}\cap \mathcal{R}^c:r(i)=j} w_i +  \sum_{i \in \mathcal{L}^c \cap \mathcal{R}: l(i)=j} \frac{-w_i e^{\phi_j}}{1-e^{\phi_j}} \\
    % & \quad + \sum_{i \in \mathcal{L}^c \cap \mathcal{R}^c: r(i)=j} \frac{w_i e^{\phi_j}}{e^{\phi_j} - e^{\phi_{l(i)}}} - 
    % \sum_{i \in \mathcal{L}^c \cap \mathcal{R}^c: l(i)=j} \frac{w_i e^{\phi_j}}{e^{\phi_{r(i)}} - e^{\phi_j}} \\
    &= \sum_{i \in \mathcal{L}\cap \mathcal{R}^c:r(i)=j} w_i -  \sum_{i \in \mathcal{L}^c \cap \mathcal{R}: l(i)=j} \frac{w_i}{e^{-\phi_j}-1} \\
    & \quad \quad  + \sum_{i \in \mathcal{L}^c \cap \mathcal{R}^c: r(i)=j} \frac{w_i}{1 - e^{\phi_{l(i)} -\phi_j}} - 
    \sum_{i \in \mathcal{L}^c \cap \mathcal{R}^c: l(i)=j} \frac{w_i}{e^{\phi_{r(i)} -\phi_j} -1},\\
    \frac{\partial^2 l(\phi)}{\partial \phi_j^2} &= 
     -  \sum_{i \in \mathcal{L}^c \cap \mathcal{R}: l(i)=j} \frac{w_ie^{-\phi_j}}{(e^{-\phi_j}-1)^2} - \sum_{i \in \mathcal{L}^c \cap \mathcal{R}^c: r(i)=j} \frac{w_i e^{\phi_{l(i)} - \phi_j}}{(1-e^{\phi_{l(i)}-\phi_j})^2} 
\\
& \quad \quad  -  \sum_{i \in \mathcal{L}^c \cap \mathcal{R}^c: l(i)=j} \frac{w_i e^{\phi_{r(i)}-\phi_j}}{(e^{\phi_{r(i)}-\phi_j}-1)^2}.
\end{align*}
The key remark for efficient calculation of these derivatives is that each observation can at most contribute to two terms in the gradient or the Hessian matrix. Thus, calculating all the terms in the gradient and the Hessian matrix requires $O(n)$ computation.

\section{Simulation studies}\label{sect:simulation}
In this section, we present some simulation results of the log-concave MLE and a couple of contenders. Recall that $\hat{F}_{lc}$ denotes the log-concave MLE of the underlying distribution function $F_0$, assuming the distribution function is log-concave. We set $\hat{F}_{lc}(t) = \hat{F}_{lc}(\tau_m)$ for $t > \tau_m$. Let $\hat{F}_{un}$ be the unconstrained MLE of $F_0$, without imposing any additional shape constraints. Also, $\hat{F}_{lcd}$ represents the estimate of the distribution function derived from the MLE of a log-concave density, assuming the underlying density function $f_0$ is log-concave. Computation of $\hat{F}_{un}$ and $\hat{F}_{lcd}$  are done using the \verb|R| package \verb|icenReg| \citep{anderson2017icenreg} and
\verb|logconPH| \citep{Anderson-Bergman2014logconPH}, respectively.

\subsection{Visual comparison}
In Figure \ref{fig:weibull}, we compare $\hat{F}_{lc}, \hat{F}_{un}$ and $\hat{F}_{lcd}$ visually based on one particular dataset with a sample size of $500$. We consider a case 2 interval censoring with censoring times $C_1$ and $C_2$ generated from Unif$(0, 1)$ and Unif$(C_1, 2)$ respectively. The event times $T$ are generated from a truncated Weibull distribution on $[0, 2]$ with a scale parameter of $1$ and varying shape parameters. Note that the density function for a Weibull distribution is log-convex when the shape parameter is less than $1$; it is log-linear and coincides with the exponential distribution when the shape parameter equals $1$; and it is log-concave when the shape parameter is greater than $1$. Nevertheless, the distribution function remains log-concave in all scenarios. Figure \ref{fig:weibull} shows that when the underlying density function $f_0$ is log-concave, both $\hat{F}_{lc}$ (blue dashed line) and $\hat{F}_{lcd}$ (purple dot-dashed line) are very close to the underlying distribution function $F_0$ (red solid line). However, when the log-concavity only holds for $F_0$ but not $f_0$, $\hat{F}_{lcd}$ exhibits a substantial bias because the model assumption is violated. The nonparametric MLE $\hat{F}_{un}$ (black dotted line) also centers around $F_0$, but it is a nonsmooth step function.
\begin{figure}[ht]
    \captionsetup{font=footnotesize}
    \centering
    \includegraphics[width=1\linewidth]{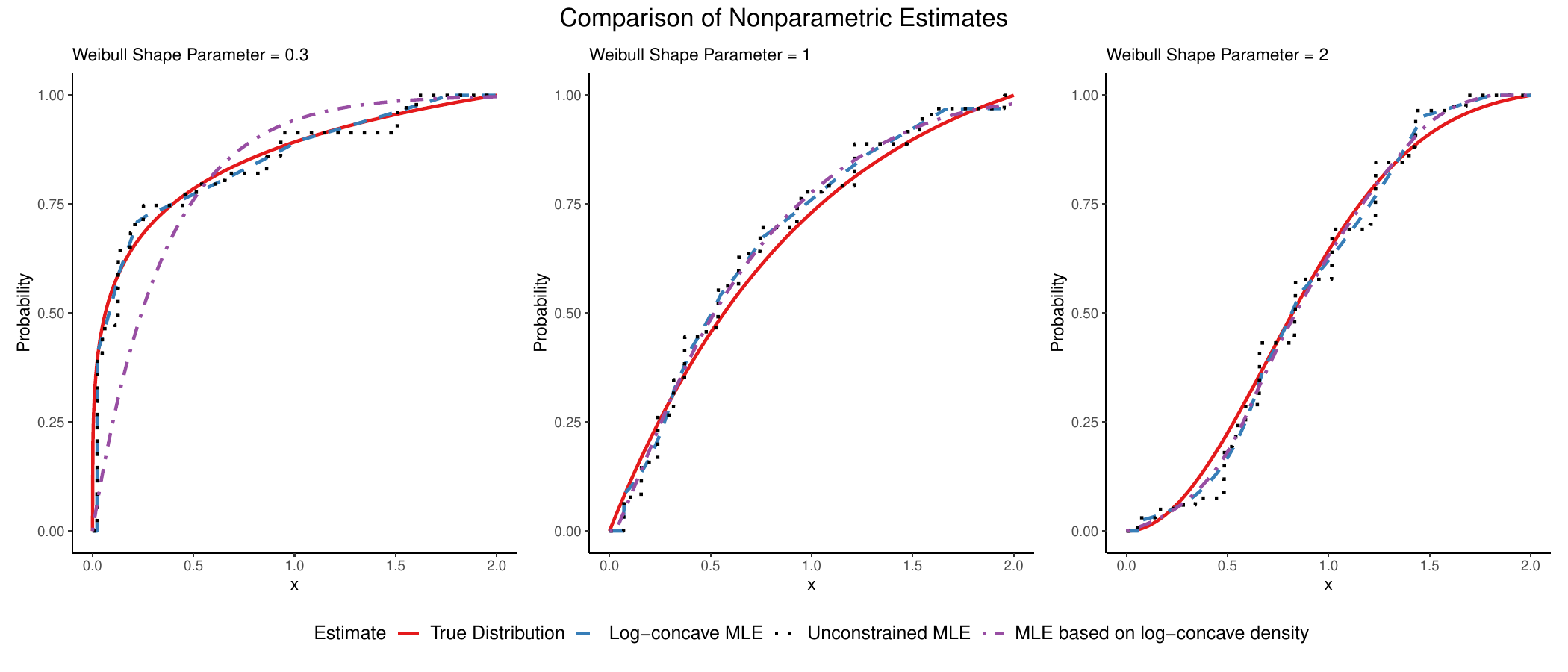}
    \caption{Nonparametric Estimates for an Interval-censored Weibull distributed Dataset with $N=500$}
    \label{fig:weibull}
\end{figure}

\subsection{Computational performance}
Next, we evaluate the performance of our computational algorithm and its implementation. The algorithm is terminated if either of these convergence criteria holds: $\max_{a \in A} b_a^\top \nabla l(\phi) < \eta$ or the change of log-likelihood is less than $\eta$, where $\eta = 10^{-10}$. We follow the same setting described above for obtaining Figure \ref{fig:weibull}, with $N=1000$ and $N=10000$. Because both the event times and censoring times are continuously distributed, there are no ties in the data and thus the weights $w_i\equiv 1$. Table \ref{table:time} shows the summary statistics over $100$ replications. $t_\text{mean}$ and $t_{\text{median}}$ denote the average and median times in seconds for the algorithm to converge, while $t_{\text{mean}}^{lcd}$ and $t_{\text{median}}^{lcd}$ are the corresponding durations for computing the MLE based on a log-concave density assumption. \# of knots indicates the average number of knots for the log-concave MLE, while the value in parentheses represents the corresponding standard deviation. $L_1(\hat{F}_{lc})$ denotes the average over $100$ replicates of $\frac{1}{M}\sum_{i=1}^M |\hat{F}_{lc}(y_i) - F_0(y_i)|$, each calculated over $M=1000$ uniformly spaced grid points $y_i$ on $[0, 2]$. We similarly define the empirical absolute bias as $L_1(\hat{F}_{un})$ and $L_1(\hat{F}_{lcd})$, corresponding to the unconstrained MLE and the MLE under the log-concave density assumption, respectively. The proposed algorithm runs reasonably fast: converging within $0.2$ seconds for $N = 1000$ and around $2$ to $3$ seconds for $N = 10000$. On the other hand, the computation of $\hat{F}_{lcd}$ takes considerably more time when the underlying density is log-concave (shape parameter $\geq 1$). All the computations were performed on a laptop equipped with an intel i9-12900HK CPU at 2.50 GHz and 64 GB RAM. 

\begin{table}[ht]
\captionsetup{font=footnotesize}
\resizebox{\textwidth}{!}{
\centering
\begin{tabular}{rrrrrrrrrr}
  \hline
 $N$ & shape & $t_{\text{mean}}$ & $t_{\text{median}}$  & \# of knots& $t_{\text{mean}}^{lcd}$ & $t_{\text{median}}^{lcd}$ &  $L_1(\hat{F}_{lc}$) & $L_1(\hat{F}_{un}$) & $L_1(\hat{F}_{lcd})$\\ 
  \hline
 1000& 0.3 & 0.17 & 0.11 & 8.38  (1.38) & 0.08 & 0.08 & 1.25 & 1.88 & 6.71\\ 
   & 1& 0.18 & 0.13 & 9.01 (1.62) & 13.8 & 13.4 & 1.56 & 2.49 & 1.20 \\ 
   &2 & 0.14 & 0.12 & 8.77  (1.36) & 12.6 & 12.5 & 1.44 & 2.38 & 1.24\\ 
  10000 & 0.3  & 2.81 & 1.97 & 13.95 (1.76) &3.12& 3.09  & 0.45 & 0.80 & 6.69 \\ 
&  1 & 3.12 & 2.10 & 14.90  (2.01)& 83.6 & 40.3  & 0.57 & 1.08 & 0.90\\ 
   &2 & 2.54 & 2.14 & 15.29  (1.87) & 1044 & 941 & 0.56 & 1.05 & 0.49\\ 
   \hline
\end{tabular} }
\caption{Computational Performance of the Proposed Algorithm for the Log-concave MLE. The empirical absolute bias $L_1(\hat{F}_{lc})$, $L_1(\hat{F}_{un})$ and $L_1(\hat{F}_{lcd})$ are reported in the units of $10^2$. The underlying distribution of the event time is Weibull, with the shape parameter specified in the table and scale parameter equals $1$.}
\label{table:time}
\end{table}

\begin{table}[ht]
\captionsetup{font=footnotesize}
\resizebox{\textwidth}{!}{
\centering
\begin{tabular}{r|c|rrrrr||rrrrr}
  \hline
   & & \multicolumn{5}{c||}{Exp(1)} & \multicolumn{5}{c}{Weibull(2, 1)} \\  \hline
 $N$ &Est. & 0.1 & 0.3 & 0.5 & 0.7 & 0.9 & 0.1 & 0.3 & 0.5 & 0.7 & 0.9 \\ 
  \hline
250 & $\hat{F}_{lc}$  & -1.84 (6.77) & 0.03 (4.79) & 0.13 (4.3) & 0.3 (3.79) & 0.51 (2.83) & 0.17 (3.99) & 0.01 (4.19) & 0.13 (4.28) & -0.12 (4.49) & 0.48 (3.44) \\ 
 & $\hat{F}_{un}$ & -3.12 (6.29) & -0.52 (7.31) & 0.06 (6.68) & 0.4 (5.5) & 0.16 (4.66) & -1.17 (4.81) & -0.41 (6.88) & 0.15 (6.98) & -1.11 (7.56) & 0.63 (4.97) \\ 
  & $\hat{F}_{lcd}$ & -2.1 (2.55) & -1.58 (2.98) & -0.4 (3.29) & 0.83 (3.06) & 1.13 (2.05) & -0.15 (3.02) & 0.22 (3.48) & -0.07 (3.72) & -0.3 (3.78) & 0.25 (2.74) \\  \hline
500  & $\hat{F}_{lc}$  & -0.26 (4.53) & 0.14 (3.59) & 0.04 (3.09) & 0.15 (2.87) & 0.19 (2.03) & 0.12 (2.5) & 0.04 (3.15) & 0.28 (3.23) & 0.16 (3.51) & 0.1 (2.52) \\ 
  &$\hat{F}_{un}$ & -1.64 (4.9) & -0.07 (5.61) & -0.04 (4.96) & 0.3 (4.16) & 0.07 (3.36) & -0.6 (3.87) & -0.27 (5.12) & 0.06 (5.4) & -0.28 (5.82) & 0.24 (3.72) \\ 
  &$\hat{F}_{lcd}$ & -1.47 (1.73) & -1.16 (2.1) & -0.34 (2.37) & 0.63 (2.22) & 0.82 (1.43) & -0.09 (2.13) & 0.19 (2.61) & 0.15 (2.73) & 0 (2.74) & 0.15 (2.02) \\ \hline
1000  &$\hat{F}_{lc}$  & 0.12 (2.9) & 0.04 (2.71) & 0.07 (2.34) & 0.19 (2.07) & 0.09 (1.56) & 0.1 (1.8) & 0 (2.45) & 0.17 (2.45) & -0.06 (2.59) & 0.07 (1.93) \\ 
  &$\hat{F}_{un}$ & -0.89 (3.91) & -0.17 (4.28) & 0.13 (3.83) & 0.28 (3.14) & 0.1 (2.67) & -0.17 (2.95) & 0 (4.23) & 0.18 (4.25) & -0.6 (4.39) & 0.3 (2.95) \\ 
  &$\hat{F}_{lcd}$ & -1.05 (1.19) & -0.85 (1.51) & -0.22 (1.68) & 0.6 (1.64) & 0.69 (1.03) & 0.02 (1.53) & 0.06 (1.93) & 0.06 (2.02) & -0.07 (2) & -0.02 (1.41) \\ 
   \hline
\end{tabular}
}
\caption{The estimated bias values are reported in units of $ \times 10^2$, and the numbers in parentheses indicate the estimated standard deviations, also in units of $ \times 10^2$. Exp(1) denotes the truncated exponential distribution with scale parameter $1$ on $[0, 2]$. Weibull(2, 1) denotes the truncated Weibull distribution with shape parameter $2$ and scale parameter $1$ on $[0, 2]$.}
\label{table:weibull_1_2}
\end{table}

\begin{table}[ht]
\captionsetup{font=footnotesize}
\resizebox{\textwidth}{!}{
\centering
\begin{tabular}{r|c|rrrrr||rrrrr}
  \hline
  & & \multicolumn{5}{c||}{Log-logistic(0.5, 1) } & \multicolumn{5}{c}{Log-logistic(1, 1)} \\ \hline
 $N$ & Est.& 0.1 & 0.3 & 0.5 & 0.7 & 0.9 & 0.1 & 0.3 & 0.5 & 0.7 & 0.9 \\ 
  \hline
$250$ & $\hat{F}_{lc}$ & -7.41 (7.31) & 0.21 (7.34) & -0.01 (4.47) & -0.19 (4.54) & 0.24 (2.87) & -0.92 (6.47) & 0.05 (4.57) & 0.14 (4.58) & -0.7 (5.69) & 0.07 (3.18) \\ 
 & $\hat{F}_{un}$ & -7.73 (6.46) & -2.42 (9.16) & -0.15 (6.32) & -0.63 (6.49) & 0.08 (4.02) & -2.53 (5.83) & -0.37 (6.65) & 0.54 (6.13) & -2.28 (8.38) & 0.04 (4.35) \\ 
  & $\hat{F}_{lcd}$ & -9.4 (0.08) & -22.49 (0.94) & -21.51 (3.1) & -1.81 (4.66) & 8.27 (0.89) & -3.61 (0.67) & -8.06 (2.07) & -7.72 (3.37) & -1.63 (3.83) & 4.73 (1.58) \\  \hline
  $500$ & $\hat{F}_{lc}$ & -6.54 (7.32) & -0.03 (5.14) & 0.08 (3.33) & -0.19 (3.46) & 0.03 (2.04) & -0.2 (4.5) & 0.08 (3.44) & 0.24 (3.41) & -0.22 (4.45) & 0.01 (2.3) \\ 
  & $\hat{F}_{un}$ & -6.96 (6.51) & -1.33 (6.89) & -0.15 (4.71) & -0.86 (4.99) & 0.05 (2.99) & -1.76 (4.81) & -0.16 (5) & 0.3 (4.7) & -1.28 (6.63) & 0.03 (3.29) \\ 
  & $\hat{F}_{lcd}$ & -9.4 (0.05) & -22.56 (0.65) & -21.71 (2.17) & -1.98 (3.3) & 8.32 (0.62) & -3.65 (0.46) & -8.19 (1.44) & -7.91 (2.37) & -1.8 (2.71) & 4.73 (1.13) \\ \hline
  $1000$ & $\hat{F}_{lc}$ & -5.05 (7.42) & 0.03 (3.86) & -0.05 (2.44) & -0.26 (2.58) & -0.03 (1.6) & 0.25 (2.64) & 0.04 (2.58) & 0.06 (2.39) & -0.17 (3.41) & -0.19 (1.68) \\ 
  & $\hat{F}_{un}$ & -5.89 (6.32) & -0.77 (5.32) & -0.22 (3.7) & -1.08 (3.9) & -0.01 (2.41) & -0.64 (3.57) & -0.13 (3.98) & 0.25 (3.44) & -0.63 (5.17) & -0.07 (2.55) \\ 
  & $\hat{F}_{lcd}$ & -9.4 (0.04) & -22.59 (0.46) & -21.8 (1.53) & -2.07 (2.33) & 8.34 (0.43) & -3.63 (0.31) & -8.13 (0.98) & -7.79 (1.6) & -1.63 (1.83) & 4.83 (0.75) \\ 
   \hline
\end{tabular}
}
\caption{The estimated bias values are reported in units of $ \times 10^2$, and the numbers in parentheses indicate the estimated standard deviations, also in units of $ \times 10^2$. Log-logistic($a$, $b$) denotes the truncated log-logistic distribution on $[0, 10]$ with shape parameter $a$ and scale parameter $b$.}
\label{table:loglogistic_0.5_1}
\end{table}

\begin{table}[ht]
\captionsetup{font=footnotesize}
\resizebox{\textwidth}{!}{
\centering
\begin{tabular}{r|c|rrrrr||rrrrr}
  \hline
  & & \multicolumn{5}{c||}{Log-normal(0, 1) } & \multicolumn{5}{c}{Log-normal(0, 2)} \\ \hline
 $N$ & Est. & 0.1 & 0.3 & 0.5 & 0.7 & 0.9 & 0.1 & 0.3 & 0.5 & 0.7 & 0.9 \\ 
  \hline
$250$ & $\hat{F}_{lc}$ & 0.24 (4.91) & 0.06 (4.64) & 0.18 (5.86) & -0.37 (7.24) & -0.05 (3.66) & -2.55 (7.59) & -0.21 (5.09) & 0.14 (4.19) & -0.66 (5.34) & 0.24 (3.05) \\ 
 & $\hat{F}_{un}$ & -1.38 (5.16) & -0.33 (6.73) & 3.46 (8.55) & -1.68 (10.17) & 0.15 (4.99) & -3.77 (6.74) & -1.18 (7.14) & 0.13 (5.8) & -2.16 (7.98) & 0.32 (4.25) \\ 
  & $\hat{F}_{lcd}$ & 0.57 (2.68) & -1.82 (2.73) & -3.8 (3.34) & -3.19 (3.81) & 0.79 (2.48) & -5.59 (0.5) & -12.64 (1.81) & -10.92 (3.44) & -0.19 (4.06) & 6.34 (1.31) \\  \hline
  $500$ & $\hat{F}_{lc}$ & 0.33 (2.96) & 0.07 (3.25) & 0.16 (4.4) & -0.33 (5.38) & -0.05 (2.76) & -0.55 (5.66) & 0 (3.84) & 0.06 (2.92) & -0.3 (4.02) & 0.02 (2.15) \\ 
  & $\hat{F}_{un}$  & -0.79 (4.13) & -0.08 (5.05) & 2.72 (6.6) & -1.14 (7.85) & 0.07 (3.74) & -2.25 (5.35) & -0.48 (5.34) & 0.04 (4.32) & -1.57 (6.34) & 0.15 (3.12) \\ 
  & $\hat{F}_{lcd}$  & 0.85 (1.79) & -1.59 (1.81) & -3.76 (2.26) & -3.34 (2.59) & 0.57 (1.68) & -5.62 (0.35) & -12.74 (1.28) & -11.1 (2.45) & -0.32 (2.92) & 6.37 (0.94) \\ \hline
  $1000$ & $\hat{F}_{lc}$ & -0.04 (1.98) & -0.08 (2.47) & 0.17 (3.21) & -0.11 (4.05) & -0.02 (2.13) & 0.23 (3.75) & -0.04 (2.71) & 0.06 (2.23) & -0.34 (3.11) & -0.03 (1.62) \\ 
  &$\hat{F}_{un}$  & -0.64 (2.96) & -0.35 (4.02) & 1.74 (4.9) & -0.36 (6.2) & 0.01 (3.01) & -1.17 (4.24) & -0.48 (4.17) & 0.12 (3.31) & -1.21 (4.75) & 0.07 (2.53) \\ 
  & $\hat{F}_{lcd}$  & 1.11 (1.29) & -1.43 (1.39) & -3.73 (1.67) & -3.43 (1.87) & 0.46 (1.2) & -5.63 (0.24) & -12.78 (0.88) & -11.14 (1.69) & -0.33 (2.02) & 6.4 (0.65) \\ 
   \hline
\end{tabular}
}
\caption{The estimated bias values are reported in units of $ \times 10^2$, and the numbers in parentheses indicate the estimated standard deviations, also in units of $ \times 10^2$. Log-normal($m$, $s$) denotes the truncated log-normal distribution on $[0, 10]$ with mean $m$ and standard deviation $s$ in the log scale.}
\label{table:lognormal_1_2}
\end{table}

\subsection{Estimation performance}
Now, we compare the finite-sample performance of $\hat{F}_{lc}$ against $\hat{F}_{un}$ and $\hat{F}_{lcd}$ across various underlying distributions. In each simulation setting, we consider $1000$ Monte Carlo datasets with a sample size of $250$, $500$ or $1000$. We assess the estimators via the empirical bias and standard deviation of the estimates at the $0.1,0.3,0.5,0.7,0.9$ quantile levels of the underlying distribution function. We consider case 2 interval censoring in this section and provide additional simulation results for current status data in Appendix \ref{app:current_status_sim}.

In Table \ref{table:weibull_1_2},  we consider two event time distributions with a log-concave density: (i) a truncated standard exponential distribution on $[0, 2]$, or (ii) a Weibull distribution with a shape parameter of $2$ and a scale parameter of $1$ truncated on $[0, 2]$. The censoring times are generated as $C_1 \sim \text{Unif}(0, 1)$ and $C_2 \sim \text{Unif}(C_1, 2)$ in both settings. For the Weibull distribution, the empirical biases for $\hat{F}_{lc}$ and $\hat{F}_{lcd}$ are similar, consistent with the observations in Figure \ref{fig:weibull}. For the exponential distribution, we have a log-linear density function, so this corresponds to a boundary case for the log-concave density assumption. Under this scenario, we observe that $\hat{F}_{lc}$ has a smaller bias than $\hat{F}_{lcd}$. Comparing $\hat{F}_{lc}$ with $\hat{F}_{un}$, we see a significant improvement in terms of both bias and standard deviation. 

In Table \ref{table:loglogistic_0.5_1}, we consider two settings where the event times are generated from a truncated log-logistic distribution on $[0, 10]$ with a shape parameter of $0.5$ or $1$, respectively. The censoring times $C_1$ and $C_2$ are generated from Unif$(0, 1)$ and Unif$(C_1, 10)$ in both settings. Both distribution functions are log-concave but do not have log-concave densities. In these cases, $\hat{F}_{lcd}$ suffers from a large bias, while $\hat{F}_{lc}$ outperforms $\hat{F}_{un}$ in terms of smaller empirical bias and standard deviation. The relatively large bias at the $0.1$ quantile level for both $\hat{F}_{lc}$ and $\hat{F}_{un}$ in the log-logistic distribution with a shape parameter of $0.5$ is because the underlying distribution function increases sharply near $0$, while the censoring mechanism leads to limited information in that region.

Table \ref{table:lognormal_1_2} examines two settings where the event times are generated from a truncated log-normal distribution on $[0, 10]$. If $X \sim \text{Log-normal}(m, s)$, then $\log X$ has a normal distribution with mean $m$ and standard deviation $s$. We consider the cases where $(m, s) = (0, 1)$ and $(0, 2)$. The censoring times $C_1$ and $C_2$ are generated from Unif$(0, 1)$ and Unif$(C_1, 10)$ in both settings. Similar to the log-logistic case, a lognormal distribution has a log-concave distribution function but not a log-concave density. As expected, $\hat{F}_{lc}$ outperforms $\hat{F}_{un}$ in terms of smaller empirical bias and standard deviation. Additionally, we observe that the bias of $\hat{F}_{lcd}$ remains constant as the sample size increases, and its magnitude depends on the deviation from log-concavity of the underlying density.

In summary, $\hat{F}_{lc}$ generally exhibits smaller empirical bias and standard deviation than $\hat{F}_{un}$. When the density is log-concave, $\hat{F}_{lcd}$ has a lower standard deviation but may suffer from a higher bias than $\hat{F}_{lc}$. When only the distribution function but not its density is log-concave, $\hat{F}_{lcd}$ can experience a substantial bias. Additional simulation results for current status data with unbounded underlying distributions and discrete censoring times are provided in the supplementary materials.

\section{Real data applications}\label{sect:real_data}

\subsection{Hepatitis A data}
\cite{keiding1991age} considered a cross-sectional study on Hepatitis A virus in Bulgaria, where blood samples were collected from 850 individuals, ranging in age from 1 to 86 years, including school children and blood donors, to determine the presence of Hepatitis A immunity. The dataset is available in the R package \verb|curstatCI|  \citep{groeneboom2017curstatCI} as \verb|hepatitisA|. It contains information about the age of participants at the time of study and whether they are seropositive for Hepatitis A. Since Hepatitis A infection is assumed to confer lifelong immunity, the data represents the immunization status of each individual. However, we only know whether the individual is immunized at the time of study, but not the exact time of infection. This is an example of current status data, a special case of interval censored data. In Figure \ref{fig:real_data_A}, we observe that both the unconstrained MLE and its logarithm appear to have a concave shape. We applied the log-concave MLE to this data, with the results shown in blue in Figure \ref{fig:real_data_A}. The log-concave MLE seems to fit the data well, and closely resembles the shape of the unconstrained MLE while offering a smoother estimate.

\begin{figure}[ht]
    \captionsetup{font=footnotesize}
    \centering
    \includegraphics[width=1\linewidth]{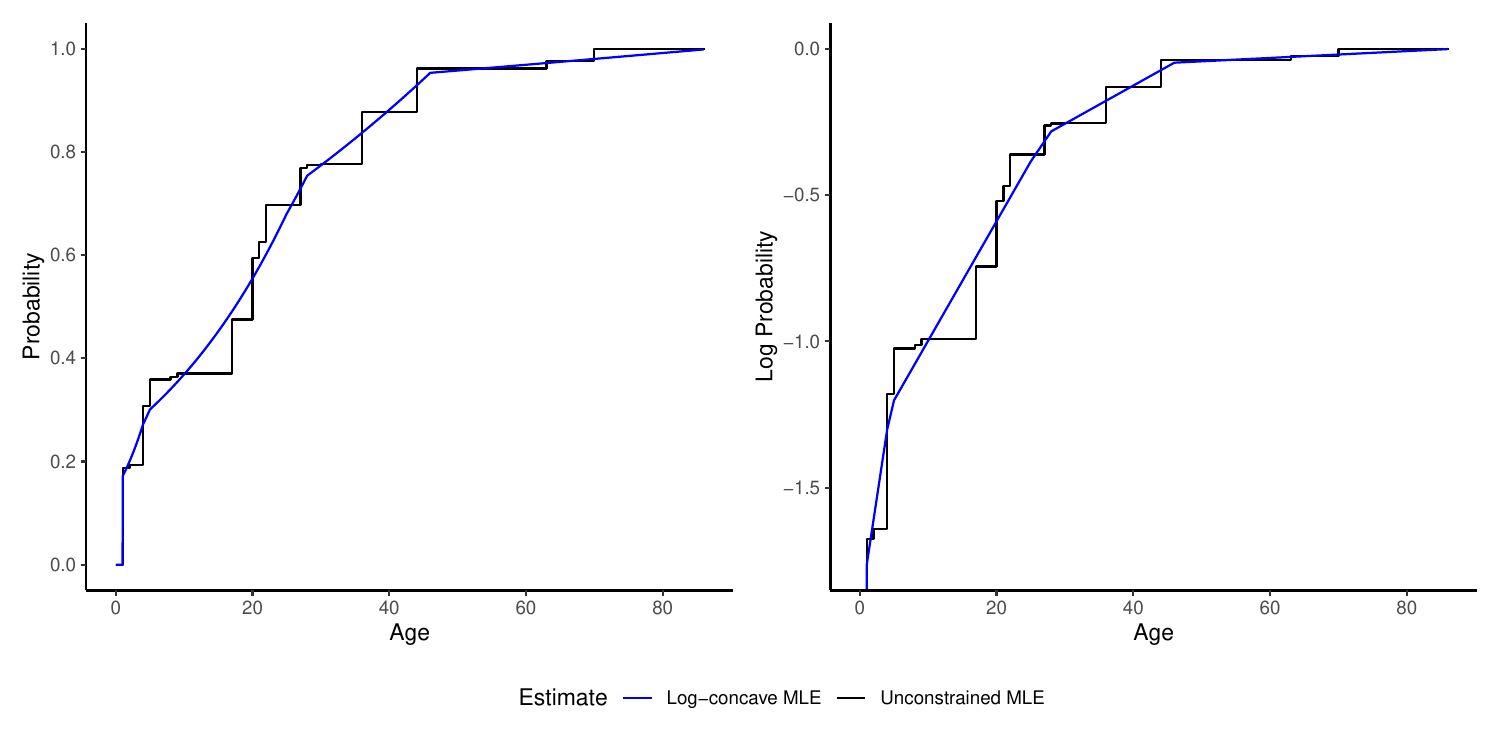}
    \caption{Hepatitis A in Bulgaria: Estimated distribution functions for the age of occurrence (left) and the estimated logarithmic distribution functions (right).}
    \label{fig:real_data_A}
\end{figure}

\subsection{Breast cosmesis data}
The breast cosmesis data from \cite{finkelstein1985semiparametric} has been widely utilized to illustrate new methods for modeling interval-censored data. The study aimed to compare the time to cosmetic deterioration between two groups of early breast cancer patients: 46 patients who received radiotherapy only and 48 patients who received both radiotherapy and chemotherapy. Patients were initially seen at clinic visits every 4 to 6 months, with the interval between visits increasing after completion of primary irradiation treatment.

The study's endpoint was cosmetic deterioration, defined by the appearance of breast retraction. Of the 94 women in the study, 56 experienced deterioration and were interval-censored between the clinic visits, while the remaining 38, who did not experience deterioration, were right-censored at their last clinic visit. The patients in this study were treated at the Joint Center for Radiation Therapy in Boston between 1976 and 1980. For more details, see \cite{beadle1984cosmetic} and  \cite{beadle1984effect}. 

Figures \ref{fig:real_data_rad} and \ref{fig:real_data_radchem} show the unconstrained MLE and log-concave MLE for the radiotherapy only group and the radiotherapy plus chemotherapy group, respectively. In both Figures \ref{fig:real_data_rad} and \ref{fig:real_data_radchem}, we observe that the logarithm of the unconstrained MLEs appear to be concave. Our log-concave MLE fits well for both groups of patients as it closely resembles the shape of the unconstrained MLE while providing a smoother estimate of the underlying distribution function. It is noteworthy that the unconstrained MLE of the distribution function in Figure \ref{fig:real_data_radchem} seems to be first convex then concave, so directly applying a concave-constrained estimator could be problematic.
\begin{figure}[ht]
    \captionsetup{font=footnotesize}
    \centering
    \includegraphics[width=1\linewidth]{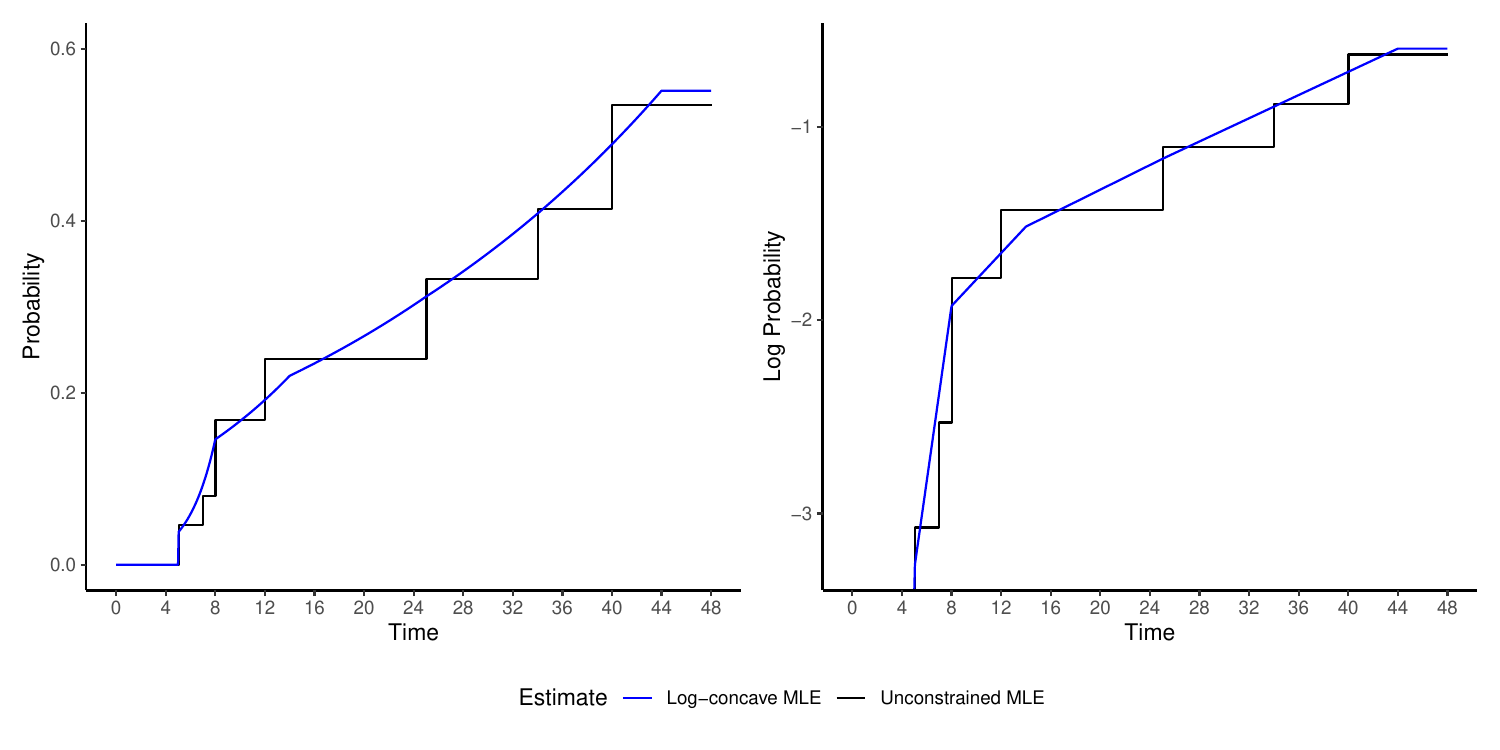}
    \caption{Radiotherapy only group: Estimated distribution functions of the time of deterioration (left) and the estimated logarithmic distribution functions (right).}
    \label{fig:real_data_rad}
\end{figure}

\begin{figure}[ht]
    \captionsetup{font=footnotesize}
    \centering
    \includegraphics[width=1\linewidth]{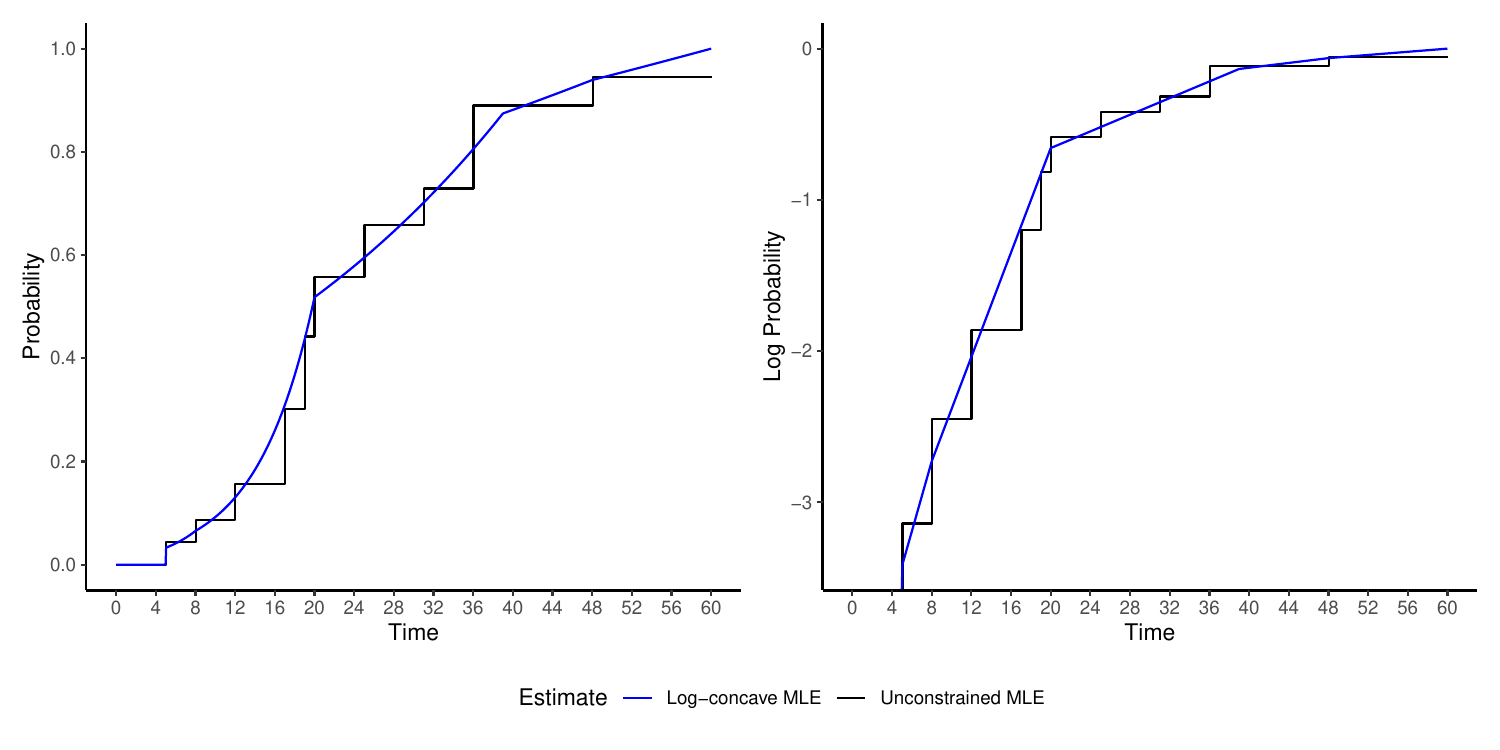}
    \caption{Radiotherapy plus chemotherapy group: Estimated distribution functions of the time of deterioration (left) and the estimated logarithmic distribution functions (right).}
    \label{fig:real_data_radchem}
\end{figure}

\section{Discussion}\label{sect:discussion}
In this article, we consider the nonparametric estimation on mixed-case interval-censored data for a log-concave distribution function. This log-concavity assumption offers greater flexibility than the log-concavity of the density function or the concavity of the distribution function. Simulation studies demonstrate that this shape-constrained estimator significantly outperforms the unconstrained estimator when the underlying distribution function is log-concave.
%It is noteworthy that our proposal can be readily extended to handle the scenario where the survival function $S_0 := 1 - F_0$ is log-concave instead of $F_0$ itself.

A future research direction is to develop a goodness-of-fit test for the log-concavity of $F_0$ based on interval-censored data. Currently, a graphical check that can be performed is to overlay the unconstrained MLE and the log-concave MLE as illustrated in the real data applications.
% One potential approach is to construct a test statistic based on the $L_2$-distance between the log-concave MLE, $\hat{F}_{lc}$, and the unconstrained MLE, $\hat{F}_{un}$. Under the null hypothesis that $F_0$ is log-concave, the appropriately normalized test statistic should converge in distribution, while under the alternative hypothesis that $F_0$ is not log-concave, it should diverge to infinity. 
Another important direction for future work is to develop an inference procedure based on the log-concave MLE, for example, by first deriving its asymptotic distribution. In general, establishing the asymptotic distribution of shape-constrained estimators is a challenging task and typically needs to be addressed on a case-by-case basis. 

\appendix
\section{Appendix: Proofs for Section \ref{sect:exist_unique_consistency}}\label{sect:proof}
   
\begin{proof}[Proof of Lemma \ref{lemma:log_like_concave}]
To show that $l$ is concave, it suffices to show that the function $g(x, y) := \log(e^x - e^y)$, for $x \geq y$, is concave, as $l$ is a weighted sum of functions of this type. By straightforward calculation, the Hessian matrix of $g$ is
\begin{equation*}
    H_g = \frac{e^x e^y}{(e^x - e^y)^2} \left(
\begin{array}{cc}
     -1 & 1 \\
     1& -1 
\end{array}
    \right).
\end{equation*}
For any $v = (v_1, v_2)^\top \in \mathbb{R}^2$, $v^\top H_g v = -\frac{e^x e^y}{(e^x - e^y)^2}(v_1 - v_2)^2 \leq 0$. Thus, $H_g$ is negative semi-definite, implying that $g$ is concave. 

To see that $g$ is not strictly concave, consider $(x_1,y_1) = (0, -1)$ and $(x_2, y_2) = (-1, -2)$. Then, for any $t \in [0, 1]$, it is straightforward to see that
\begin{align*}
    tg(x_1, y_1) + (1-t) g(x_2,y_2)&=
    t \log (e^0 -  e^{-1}) + (1-t) \log (e^{-1} - e^{-2}) \\
    &= \log (e^{t-1}- e^{t-2}) \\
    &=g(tx_1+(1-t)x_2, ty_1+(1-t)y_2).
\end{align*}
\end{proof}

    \begin{proof}[Proof of Theorem \ref{thm:unique_existence}]
We first show the existence of a maximizer. Recall the definition of $\mathcal{L}$ in \eqref{eq:mathcal_L}. After setting $\phi_i = -\infty$ for $i=1,\ldots,s^*$, where $s^* = |\{L_i: i \in \mathcal{L}\}|$. The log-likelihood includes at least the term $\log(e^{\phi_{s^*+1}}) = \phi_{s^*+1}$. Given the constraints $\phi_{s^*+1} \leq \ldots \leq \phi_m \leq 0$, it follows that $\phi_{s^*+1} \rightarrow -\infty$ if $\sqrt{\sum_{s^*+1}^m \phi_i^2} \rightarrow \infty$. Therefore, $l(\phi) \rightarrow -\infty$ as $\sqrt{\sum_{s^*+1}^m \phi_i^2}  \rightarrow \infty$. Moreover, since $l$ is continuous and the constraints given by \eqref{eq:monotone_constraint} and \eqref{eq:log_concave_constraint} form a closed set, it follows that $l$ has a maximizer.

The proof of uniqueness is similar to that in \cite{dumbgen2006estimating} for the concave-constrained MLE. Because of our parameterization in terms of the log distribution function, some elements of the maximizer take value of negative infinity (corresponding to the maximizer in terms of the distribution function taking value of $0$). 
    Suppose that $\hat{\phi}$ and $\hat{\phi} + D$ are two different maximizers of $l$ over $\mathcal{M} \cap \mathcal{LC}$, where $D \in \mathbb{R}^m$. 
    Without loss of generality, we can set $D_i = 0$ for $i=1,\ldots,s^*$ because $\hat{\phi}_1 = \ldots = \hat{\phi}_{s^*} = -\infty$.    
    Let $\hat{\phi}^M_i = \hat{\phi}_i$ for $i > s^*$ and $\hat{\phi}^M_i = M$ for $i \leq s^*$, with $M < \hat{\phi}_{s^*+1}$. Then, we can see that both $\hat{\phi}^M$ and $\hat{\phi}^M + D$ are maximizers of $l(\phi)$ subject to $\phi_i = M$ for $i=1,\ldots,s^*$, $\Delta \phi_i/\Delta \tau_i \geq \Delta \phi_{i+1}/\Delta \tau_{i+1}$ for $i= s^*+2,\ldots,m-1$, and $\phi_i \leq \phi_{i+1}$ for $i=1,\ldots,m-1$. To show the uniqueness of $\hat{\phi}$, it suffices to show the uniqueness of $\hat{\phi}^M$.
    % , or more precisely, the uniqueness of $(\hat{\phi}_{s^*+1},\ldots,\hat{\phi}_m)$.                                                                                                                   

  Let $\tilde{\phi} := (1-t)\hat{\phi}^M + t(\hat{\phi}^M+D) = \hat{\phi}^M + tD$. Note that $l(\hat{\phi}^M + t D)$ is constant for all $t \in [0, 1]$ as $l$ is concave and both $\hat{\phi}^M$ and $\hat{\phi}^M+ D$ are maximizers. Thus,
\begin{equation*}
    \frac{d^2}{dt^2} l( \hat{\phi}^M +t D) = 0.
\end{equation*}
By straightforward calculation, we obtain
\begin{equation}\label{eq:dt2}
    \frac{d^2}{dt^2} l( \hat{\phi}^M + t D) = -\sum^n_{i=1} w_i \frac{e^{\hat{\phi}_{r(i)}^M + \hat{\phi}_{l(i)}^M + t(D_{r(i)}+ D_{l(i)}) }}{ (e^{\hat{\phi}_{r(i)}^M + tD_{r(i)}} -  
    e^{\hat{\phi}_{l(i)}^M + tD_{l(i)}})^2} \cdot (D_{r(i)} - D_{l(i)})^2.
\end{equation}
Since $\hat{\phi}_{r(i)}^M + \hat{\phi}_{l(i)}^M + t(D_{r(i)}+ D_{l(i)})$ is finite and the denominator $e^{\hat{\phi}_{r(i)} + tD_{r(i)}} - e^{\hat{\phi}_{l(i)} + tD_{l(i)}}$ cannot be zero because otherwise $\tilde{\phi}$ cannot be a maximizer as the log-likelihood becomes negative infinity, we have
\begin{equation}\label{eq:D_unique}
    D_{r(i)} - D_{l(i)} = 0, \quad \text{for } i=1,\ldots,n.
\end{equation}
Let $a$ be the smallest index in $\{1,\ldots,m\}$ such that $D_a \neq 0$. Then, let $b$ be the largest index in $\{a,\ldots,m\}$ such that $\Delta \hat{\phi}_j^M /\Delta \tau_j$ and $\Delta \tilde{\phi}_j /\Delta \tau_j$ are constant in $j\in \{a,\ldots,b\}$. This implies that $\Delta D_j/ \Delta \tau_j$ is constant in $j\in\{a,\ldots,b\}$. Furthermore, because the slope changes at $\tau_b$, there must be some $i_0$ such that $r(i_0) = b$. If not, $\hat{\phi}_b^M$ only appears in $\log (e^{\hat{\phi}_{r(i)}^M} -e^{\hat{\phi}_b^M})$ and a strictly greater log-likelihood can be obtained by changing the value of $\hat{\phi}_b^M$ to be some interpolated value of $\hat{\phi}_{b+1}^M$ and $\hat{\phi}_{b-1}^M$. This contradicts the fact that $\hat{\phi}^M$ is a maximizer. Recall that $\Delta D_j/ \Delta \tau_j$ is constant in $j \in \{a,\ldots,b\}$. If $l(i_0) \geq a$, we have
\begin{equation}\label{eq:D_unqiue_contrad1}
    \frac{D_{r(i_0)} - D_{l(i_0)}}{\tau_{r(i_0)} - \tau_{l(i_0)}} = \frac{\Delta D_a}{\Delta \tau_a} \neq 0.
\end{equation}
If $l(i_0) < a$, we have $D_{l(i_0)} = D_{a-1} = 0$ and
\begin{equation}\label{eq:D_unqiue_contrad2}
    \frac{D_{r(i_0)} - D_{l(i_0)}}{\tau_{r(i_0)} - \tau_{a-1}} = \frac{\Delta D_a}{\Delta \tau_a} \neq 0.
\end{equation}
Since \eqref{eq:D_unqiue_contrad1} or \eqref{eq:D_unqiue_contrad2} contradicts with (\ref{eq:D_unique}), $D$ must be $0$ and so $\hat{\phi}^M$ is unique.

\end{proof}

\begin{proof}[Proof of Theorem \ref{thm:consistency}]
    We apply Theorem 3 in \cite{dumbgen2006estimating} with their $\mathcal{F}$ being $\mathcal{F}_{lc}$. Their Condition (C.1) is satisfied by our assumption. Then, equation \eqref{eq:consistency} follows the same argument that leads to equation (16) in \cite{dumbgen2006estimating}.
\end{proof}

\section{Appendix: Additional simulation results}\label{app:current_status_sim}
In this section, we present additional simulation results of our estimator for current status data. The event time follows (untruncated) exponential, Weibull, log-logistic, or log-normal distribution. The censoring time is generated from a standard exponential distribution. As shown in Table \ref{table:additional}, the log-concave MLE consistently outperforms the unconstrained MLE, with smaller biases and standard deviations across different quantile levels.

\begin{table}[H]
\captionsetup{font=footnotesize}
\resizebox{\textwidth}{!}{
\centering
\begin{tabular}{r|c|rrrrr||rrrrr}
  \hline
  & & \multicolumn{5}{c||}{Exp(1) } & \multicolumn{5}{c}{Weibull(2, 1)} \\ \hline
   $N$ & Est.& 0.1 & 0.3 & 0.5 & 0.7 & 0.9 & 0.1 & 0.3 & 0.5 & 0.7 & 0.9 \\ \hline
250 & $\hat{F}_{lc}$ & -0.84 (6.82) & 0.02 (5.65) & -0.22 (5.79) & 0.32 (5.92) & 1.05 (5.1) & 0.57 (5.2) & 0.16 (5.86) & -0.02 (7.17) & 0.33 (7.2) & 0.98 (5.5) \\ 
 & $\hat{F}_{un}$ & -2.57 (6.2) & -0.85 (8.21) & -0.71 (8.23) & 0.68 (8.13) & 1.99 (6.53) & -1.34 (5.59) & -1.17 (9.14) & -0.48 (10.71) & 0.96 (10.14) & 1.72 (7.19) \\ 
  500 &$\hat{F}_{lc}$ & 0.01 (4.68) & -0.1 (4.35) & 0.1 (4.48) & 0.09 (4.22) & 0.69 (3.75) & 0.47 (3.28) & -0.11 (4.33) & 0.15 (5.38) & 0.34 (5.36) & 0.35 (4.14) \\ 
  &$\hat{F}_{un}$ & -1.59 (4.94) & -0.66 (6.23) & -0.03 (6.59) & 0.14 (6.17) & 1.46 (4.94) & -0.77 (4.42) & -0.78 (6.95) & -0.08 (8.2) & 0.55 (7.77) & 1.02 (5.59) \\ 
  1000 &$\hat{F}_{lc}$ & 0.32 (2.96) & 0.01 (3.32) & 0.02 (3.36) & -0.03 (3.35) & 0.24 (2.77) & 0.13 (2.23) & -0.1 (3.49) & 0.02 (3.98) & -0.09 (3.74) & 0.04 (2.87) \\ 
  &$\hat{F}_{un}$ & -0.71 (3.84) & -0.22 (4.94) & -0.12 (5.08) & 0.1 (5.03) & 0.84 (3.68) & -0.49 (3.47) & -0.38 (5.77) & 0.02 (6.53) & -0.02 (6.05) & 0.58 (4.09) \\  \hline
  & & \multicolumn{5}{c||}{Log-logistic(5, 1) } & \multicolumn{5}{c}{Log-normal(0, 0.5)} \\ \hline
   $N$ & Est.& 0.1 & 0.3 & 0.5 & 0.7 & 0.9 & 0.1 & 0.3 & 0.5 & 0.7 & 0.9 \\ \hline
250 &$\hat{F}_{lc}$ & -0.13 (6.52) & 0.36 (7.69) & -0.41 (8.75) & -0.31 (8.36) & 0.7 (5.79) & -0.09 (6.51) & 0.4 (6.69) & -0.29 (7.51) & -0.25 (7.13) & 0.97 (5.31) \\ 
  &$\hat{F}_{un}$ & -2.06 (6.65) & -1.84 (11.4) & -0.91 (12.6) & 0.2 (11.7) & 1.47 (7.32) & -1.9 (6.5) & -1.06 (10.03) & -0.74 (10.7) & -0.12 (9.92) & 1.74 (6.72) \\ 
  500 &$\hat{F}_{lc}$ & 0.61 (3.97) & 0.02 (5.72) & -0.15 (6.81) & 0.03 (6.57) & 0.47 (4.17) & 0.49 (4) & 0.04 (5.28) & 0.41 (6) & 0.64 (5.45) & 0.34 (3.93) \\ 
  &$\hat{F}_{un}$ & -1.04 (5.12) & -0.82 (9.01) & -0.36 (10.15) & 0.21 (9.46) & 1.1 (5.41) & -1.11 (5.01) & -0.74 (8) & 0.28 (8.7) & 0.84 (7.84) & 0.85 (5.04) \\ 
  1000 &$\hat{F}_{lc}$ & 0.42 (2.5) & -0.13 (4.21) & -0.01 (5.25) & -0.09 (4.85) & 0.14 (3.05) & 0.15 (2.52) & -0.05 (4.06) & 0.03 (4.4) & 0.04 (4.18) & 0.42 (3.06) \\ 
  &$\hat{F}_{un}$ & -0.55 (4.05) & -0.39 (6.88) & -0.28 (7.88) & -0.09 (7.15) & 0.5 (4.08) & -0.77 (3.83) & -0.49 (6.25) & -0.23 (6.78) & 0.3 (6.06) & 0.95 (3.91) \\ 
   \hline
\end{tabular}
}
\caption{The estimated bias values are reported in units of $ \times 10^2$, and the numbers in parentheses indicate the estimated standard deviations, also in units of $ \times 10^2$. The censoring times follow a standard exponential distribution.}
\label{table:additional}
\end{table}

In real data applications, we may encounter situations where the inspection times are not precisely known but rounded to the closest day or month, for example. Table \ref{table:additional2} presents the simulation results under a similar setting as in Table \ref{table:additional} except that the censoring times are only observed up to a precision of $0.1$. This also results in some possible ties in the observations and thus the weights $w_i$ can be greater than $1$. As shown, the empirical bias in the unconstrained MLE can persist in this situation, whereas the log-concave MLE exhibits considerably smaller bias due to its smoothing effect, even when the censoring times are not continuous.

\begin{table}[H]
\captionsetup{font=footnotesize}
\resizebox{\textwidth}{!}{
\centering
\begin{tabular}{r|c|rrrrr||rrrrr}
  \hline
  & & \multicolumn{5}{c||}{Exp(1) } & \multicolumn{5}{c}{Weibull(2, 1)} \\ \hline
   $N$ & Est.& 0.1 & 0.3 & 0.5 & 0.7 & 0.9 & 0.1 & 0.3 & 0.5 & 0.7 & 0.9 \\ \hline
250 & $\hat{F}_{lc}$ & -1.25 (5.24) & -0.44 (5.44) & -0.3 (5.67) & 0.18 (5.73) & 1.24 (4.84) & -0.95 (4.38) & 0.05 (6.17) & 0.23 (7.22) & 0.29 (6.88) & 0.82 (5.27) \\ 
&  $\hat{F}_{un}$ & -0.86 (5.63) & -4.83 (7.45) & -5.4 (7.94) & -0.08 (8.13) & 1.7 (6.49) & -2.07 (4.93) & -8.62 (7.81) & -2.43 (10.13) & -6.67 (10.47) & 0.28 (6.91) \\ 
500 &  $\hat{F}_{lc}$ & -1.01 (3.63) & -0.32 (3.99) & 0.02 (4.35) & -0.07 (4.51) & 0.57 (3.82) & -0.31 (3.09) & 0.17 (4.64) & -0.13 (5.14) & 0.21 (5.4) & 0.46 (4.07) \\ 
 & $\hat{F}_{un}$ & -0.81 (3.9) & -4.22 (5.78) & -5.59 (6.37) & -0.31 (6.35) & 0.85 (5.17) & -1.52 (3.97) & -7.8 (6.24) & -2.77 (7.85) & -6.86 (7.9) & -0.19 (5.52) \\ 
 1000 & $\hat{F}_{lc}$ & -0.62 (2.76) & -0.51 (3.06) & 0.21 (3.52) & 0.29 (3.27) & 0.33 (2.62) & -0.23 (2.13) & -0.06 (3.4) & -0.25 (3.8) & 0.02 (4.07) & 0.01 (2.88) \\ 
 & $\hat{F}_{un}$ & -0.62 (3) & -4.54 (4.49) & -5.12 (5.12) & 0.04 (4.91) & 0.42 (3.61) & -1.43 (2.93) & -8.12 (4.64) & -2.76 (5.96) & -6.78 (6.26) & -0.54 (4.17) \\ 
   \hline
  & & \multicolumn{5}{c||}{Log-logistic(5, 1) } & \multicolumn{5}{c}{Log-normal(0, 0.5)} \\ \hline
   $N$ & Est.& 0.1 & 0.3 & 0.5 & 0.7 & 0.9 & 0.1 & 0.3 & 0.5 & 0.7 & 0.9 \\ \hline
250 & $\hat{F}_{lc}$ & -1.02 (5.66) & -0.3 (7.4) & -0.05 (9.08) & -0.62 (8.23) & 0.72 (5.5) & -0.73 (5.06) & -0.07 (6.74) & 0.31 (7.87) & 0.18 (7.44) & 1.11 (5.56) \\ 
 & $\hat{F}_{un}$ & -3.08 (5.42) & -5.56 (9.91) & -0.15 (12.24) & -8.73 (11.56) & -1.26 (7.95) & -2.11 (5.3) & -6.26 (8.94) & -0.05 (10.54) & -5.96 (10.23) & -1.05 (7.67) \\ 
500 &  $\hat{F}_{lc}$ & -0.29 (3.63) & -0.64 (5.33) & -0.34 (6.83) & -0.4 (6.32) & 0.31 (4.2) & -0.33 (3.4) & -0.18 (5.05) & 0.13 (5.96) & 0.27 (5.58) & 0.49 (4.06) \\ 
 &  $\hat{F}_{un}$ & -2.86 (4.08) & -5.51 (7.74) & -0.58 (9.65) & -8.7 (9.45) & -1.62 (5.79) & -1.89 (4.08) & -6.13 (6.87) & -0.21 (8.25) & -6.01 (8.08) & -1.77 (5.55) \\ 
 1000 & $\hat{F}_{lc}$ & -0.13 (2.39) & -0.42 (3.92) & 0.15 (5.36) & -0.29 (4.6) & -0.05 (2.95) & -0.24 (2.54) & -0.27 (3.7) & -0.03 (4.33) & -0.16 (4.11) & 0.16 (3.08) \\ 
 & $\hat{F}_{un}$ & -2.92 (3.2) & -5.56 (5.83) & 0.14 (7.28) & -8.48 (7.07) & -1.7 (4.49) & -1.78 (3.33) & -6.1 (5.26) & -0.06 (6.47) & -6.12 (6.11) & -2.12 (4.33) \\ 
   \hline
\end{tabular}
}
\caption{The estimated bias values are reported in units of $ \times 10^2$, and the numbers in parentheses indicate the estimated standard deviations, also in units of $ \times 10^2$. The censoring times follow a standard exponential distribution rounded to the nearest $0.1$.}
\label{table:additional2}
\end{table}
\bibliographystyle{chicago}

\bibliography{main}
\end{document}